\documentclass{article}
\usepackage{fullpage}
\usepackage{booktabs}
\usepackage{multirow}
\usepackage[utf8]{inputenc}
\usepackage{amsthm,amssymb}
\usepackage{amsmath}
\usepackage{tcolorbox}
\usepackage{thmtools}
\usepackage{mathtools}
\usepackage{subcaption,thm-restate}
\usepackage[mathcal,mathscr]{eucal}
\usepackage{epsfig,graphicx,graphics,color}
\usepackage{caption}
\captionsetup[figure]{labelfont=bf}
\usepackage{enumerate}
\usepackage{enumitem}
\usepackage[sort]{cite}
\usepackage{titling}
\usepackage{hyperref}
\usepackage{verbatim}

\hypersetup{
    colorlinks=true,       
    linkcolor=blue,        
    citecolor=red,         
    filecolor=magenta,     
    urlcolor=cyan,         
    linktocpage=true
}
\usepackage{thm-restate}
\usepackage{cleveref}
\usepackage{float}
\usepackage{algpseudocode}
\usepackage{rotating}
\usepackage{makecell}
\usepackage{pdfpages}
\usepackage{graphicx}
\usepackage{subcaption}
\usepackage{algorithm}

\usepackage{longtable}
\usepackage{booktabs,siunitx,array,threeparttable}
\sisetup{group-minimum-digits=4}
\usepackage[acronym]{glossaries}

\algnewcommand{\Input}[1]{\State \textbf{Input:} #1}
\algnewcommand{\Output}[1]{\State \textbf{Output:} #1}
\algnewcommand{\Comment}[1]{\Statex \(\triangleright\) #1}

\newacronym{sg}{SG}{Storage Grid}
\newacronym{ul}{UL}{Unit Load}
\newacronym{amr}{AMR}{Automated Mobile Robot}

\theoremstyle{plain}
\newtheorem{theorem}{Theorem}
\newtheorem{lemma}[theorem]{Lemma}

\theoremstyle{definition}





\def\final{0}  
\def\iflong{\iffalse}
\ifnum\final=0  
\newcommand{\mnote}[1]{{\color{orange}[{\tiny \textbf{Malte:} \bf #1}]\marginpar{\color{orange}*}}}
\newcommand{\jnote}[1]{{\color{red}[{\tiny \textbf{Julian:} \bf #1}]\marginpar{\color{red}*}}}
\newcommand{\ynote}[1]{{\color{blue}[{\tiny \textbf{Yagmur:} \bf #1}]\marginpar{\color{blue}*}}}
\newcommand{\snote}[1]{{\color{purple}[{\tiny \textbf{Simone:} \bf #1}]\marginpar{\color{purple}*}}}
\else 
\newcommand{\mnote}[1]{}
\newcommand{\jnote}[1]{}
\newcommand{\ynote}[1]{}
\newcommand{\snote}[1]{}
\fi
\usepackage{authblk}

\title{Order Retrieval in Compact Storage Systems}

\date{}

\author{Malte Fliedner} \author{Julian Golak}
\affil{{\footnotesize Institute of Operations Management, University of Hamburg Business School, Hamburg, Germany. \\ \texttt{malte.fliedner@uni-hamburg.de; julian.golak@uni-hamburg.de}.}}
\author{Ya\u{g}mur Gül\footnote{Corresponding author} }\author{Simone Neumann}
\affil{{\footnotesize Institute of Sustainable Logistics and Mobility, University of Hamburg Business School, Hamburg, Germany. \\ \texttt{yagmur.guel@uni-hamburg.de; simone.neumann@uni-hamburg.de}.} }

\begin{document}
\maketitle
\begin{abstract}
\medskip
Growing demand for sustainable logistics and higher space utilization, driven by e-commerce and urbanization, increases the need for storage systems that are both energy- and space-efficient.
Compact storage systems aim to maximize space utilization in limited storage areas and are therefore particularly suited in densely-populated urban areas where space is scarce.
In this paper, we examine a recently introduced compact storage system in which uniformly shaped bins are stacked directly on top of each other, eliminating the need for aisles used to handle materials. 
Target bins are retrieved in a fully automated process by first lifting all other bins that block access and then accessing the target bin from the side of the system by a dedicated robot. 
Consequently, retrieving a bin can require substantial lifting effort, and thus energy.
However, this energy can be reduced through smart retrieval strategies.
From an operational perspective, we investigate how retrievals can be optimized with respect to energy consumption.

We model the retrieval problem within a mathematical framework. We show that the problem is strongly NP-complete and derive structural insights.
Building on these insights, we propose two exact methods: a mixed-integer programming (MIP) formulation and a dynamic programming algorithm, along with a simple, practitioner-oriented greedy algorithm that yields near-instant solutions.
Numerical experiments reveal that dynamic programming consistently outperforms state-of-the-art MIP solvers in small to medium sized instances, while the greedy algorithm delivers satisfactory performance, especially when exact methods become computationally impractical.
\medskip

\noindent \textbf{Keywords:} Order Picking, Warehouse Management, Algorithms,  Computational Complexity

\end{abstract}

\section{Introduction}


The trend in recent years has been towards an enormous number of (online) orders while customers tend to expect ever shorter delivery times and often same-day delivery.
At the same time, warehouse providers are under increasing pressure to operate more sustainably.
As a result, storage systems are also undergoing development and continue to evolve.
According to Statista, the global warehouse automation market is expected to grow from USD 23 billion in 2023 to USD 41 billion in 2027 \cite{Statista.2023}.
For some years now, warehouses have been using goods-to-picker systems, in which small robots move entire shelves to picking stations at high speed, thus enabling fast order picking.
However, in the case of centrally located warehouses where space is at a premium, compact storage systems that use autonomous robots for storage and retrieval are beneficial.
AutoStore\footnote{https://de.autostoresystem.com} is an example of a compact automated storage and retrieval system. Rather than having shelves with permanent aisles, individual bins (also known as boxes or stock keeping units) are stacked directly above and next to each other. These uniform bins contain the stored items and are referred to as unit loads (ULs). Storage and retrieval take place by autonomous robots via temporary aisles, which are created by lifting the \textit{blocking} ULs.
Worldwide, there are more than 1,650 systems in operation in the grocery, retail, and other sectors, and the number is growing very quickly. The system can be scaled in both length and width, with a maximum stacking height of 16.5 meters.
Very similar systems have been developed by Ocado, Jungheinrich (PowerCube), and Gebhardt, which have flexible bin heights. Related systems can be found at Attabotics and Exotec.

However, these systems typically only allow access iteratively from above or below. This means that ULs blocking access to a target UL must be retrieved from storage one by one. To overcome this limitation, a new compact storage concept was recently introduced\cite{fauve2022storage} that, unlike previous systems, allows access from the side. This creates flexible access aisles by temporarily lifting all blocking ULs out of the way.
In this paper, we investigate operational problems that arise in such compact storage systems.


\paragraph{Previous work.}

The forerunners of compact storage systems can be seen in deep lane storage systems, where products on shelves are stored one behind the other, see Stadtler (1996) \cite{stadtler1996operational}. Here, the degree of compactness is of course significantly lower than in cube-shaped systems. While there is a lot of research on these kinds of systems with permanent aisles, we do not regard these systems as compact storage systems and therefore do not consider them in the following. General reviews on warehousing and storage systems are provided by Azadeh et al. (2019) \cite{azadeh2019robotized}, Boysen et al. (2019) \cite{boysen2019warehousing}, Boysen and De Koster (2025) \cite{boysen202550}, and Winkelhaus and Grosse (2020) \cite{winkelhaus2020logistics}. An overview of the different types of problems that occur in parts-to-picker warehouses (where ULs are automatically retrieved and delivered to the pick stations) can be found in Boysen et al. (2023) \cite{boysen2023review}. The literature on compact storage systems itself is relatively limited and can be divided into papers on puzzle-based storage systems and a few papers on AutoStore. For a comprehensive overview, including these and different types of storage systems, see Fauvé et al. (2025) \cite{fauvé2024litrev}. 

Puzzle-based storage systems, also known as robotic live-cube compact storage systems, are cubic systems that have at least one empty location, referred to as an \textit{escort}. Each UL is stored on a shuttle and can move autonomously in two directions. A lift is used to move the ULs in the third direction (across different levels). The difference from the problem we are looking at is that the ULs in puzzle-based systems are autonomous via the linking and docking of their own shuttle with each individual UL and do not have to be transported by a few individual robots.
These systems can be found in automated parking systems, warehouses, and container handling.
They are analyzed mainly with respect to the retrieval times of the ULs, but also in general to find the optimal configuration for these kinds of systems.
This area is being researched in \cite{gue2007puzzle}, \cite{gue2013gridstore}, \cite{he2023reinforcement}, \cite{khai2024determining}, \cite{kota2015retrieval}, \cite{zaerpour2017small}, among others.

Only a few papers explicitly address compact storage systems like AutoStore, all of which are referred to as Robotic Compact Storage and Retrieval Systems (RCSRSs).
This class of systems is analyzed in \cite{cai2024robot}, \cite{chen2015bi}, \cite{chen2022simulation}, \cite{ko2022rollout}, \cite{tappia2017modeling}, \cite{yu2015class}, \cite{zou2018operating}. 
Warehouses with heterogeneous robots (a picker and a mover robot) are studied in \cite{kang2025warehouses}.
\cite{tutam2025paving} develops a new variant of a RCSRS, the Robotic Compact Vertical Carousel System, where blocking bins don't have to be replaced. 

The AutoStore system itself is analyzed by Trost et al. \cite{trost2022simulation}, who take a closer look at both the technical aspects and the throughput of the system. 
\cite{ha2024characteristics} compares AutoStore with other systems considering its space efficiency, robotic mechanisms, scalability, and energy efficiency, based on a preliminary overview. In another paper, the same author uses genetic algorithms to assign tasks to autonomous robots in an AutoStore system \cite{ha2024clustering}.
Energy efficiency in logistics, in general, including AutoStore, is studied in \cite{ariff2022energy}.
A case study on AutoStore's performance analysis was conducted by \cite{bremer2024systems}.
\cite{liu2024bins} study the bin relocation problem and compare it with the container relocation problem and the traveling salesman problem.

The system we are studying in this paper is a compact cube without aisles. Storage and retrieval take place from the side by creating temporary aisles. So far, this system has only been investigated by Fauvé et al. \cite{fauve2022storage}, where simple time and energy calculations were made. It can be seen as a starting point for research into this new type of storage system, which does not yet exist in the industry. In the paper at hand, we focus on analyzing the energy consumption during the retrieval process.

\paragraph{Our Contribution.}

Despite high demand and considerable interest from the industry in systems that are highly energy efficient and maximize space utilization, research into compact storage systems has been very limited to date.
With this paper, we aim to address this gap by investigating the following research question: What structural properties characterize energy-optimal retrieval batches in side-access compact storage, and how can these be exploited to design exact and scalable algorithms?
Our first contribution is to initiate the algorithmic study of compact storage systems with side-access. To this end, we formally introduce the problem and discuss different encoding schemes.
We then demonstrate that the problem is strongly NP-complete and show more structural insight. 
Building on these insight, we turn to algorithmic solutions.
We propose two algorithms to compute an optimal solution. 
For the first algorithm, we propose a mathematical programming formulation of the problem and employ a state-of-the-art solver to compute an optimal solution.
The second algorithm is based on dynamic programming, enhanced by dominance criteria to reduce the state space and improve efficiency. Finally, we propose a greedy algorithm algorithm based on simple rules, designed for straightforward implementation by practitioners.
We conclude the paper with experimental results demonstrating that the dynamic programming algorithm outperforms the state-of-the-art solver. 
Moreover, the greedy algorithm achieves satisfactory performance and scales effectively, whereas exact solution methods quickly encounter computational limitations in real-world scenarios.

\medskip
The remainder of the paper is structured as follows: Section~\ref{sec:problemDef} contains the problem definition. 
Section~\ref{sec:strucProperties}
establishes structural properties of the problem. Following this, we propose a MIP formulation in Section~\ref{sec:mathProg}.
Section~\ref{sec:algorithms} presents 
a dynamic programming approach and a greedy algorithm. The performance of these algorithms is analyzed in Section~\ref{sec:compStudy}. Finally, Section~\ref{sec:conclusions} provides a summary of our findings and proposes directions for future research.

\section{The Side-Access Compact Retrieval Problem}
\label{sec:problemDef}
In this section, we formally introduce the problem and present two alternative schemes to encode an input instance. 
In Section~\ref{sec:dense}, we give the problem definition using the encoding scheme that is efficient when the number of ULs that need to be retrieved is large relative to the size of the storage (= dense instance).
In contrast, in Section~\ref{sec:sparse}, 
we present an encoding scheme that is efficient when the number of ULs that need to be retrieved is small relative to the size of the storage (= sparse instance).

\subsection{Problem Definition and Dense Encoding}\label{sec:dense}
    We consider a \emph{storage cuboid}, which consists of multiple \emph{stacks}.
    A \emph{stack} is a vertical arrangement of ULs placed directly on top of each other without gaps.
    Note that stack heights need not be uniform.
    Furthermore, we consider a \emph{shuttle}, which is an autonomous vehicle that retrieves ULs from within the cuboid and delivers them to an external drop-off location.
	The cuboid is equipped with two distinct types of lifting mechanisms.
    On the sides of the cuboid, there are \emph{side-lifts} attached, which can raise shuttles to a vertical level.
	Each stack is equipped with a \emph{top-lift} mechanism positioned at the top of the storage cuboid. 
    The top-lift raises sub-stacks vertically to create temporary clearance, enabling the shuttle to maneuver within the cuboid.
	We refer to a row of stacks as a \emph{slice} of the cuboid.
	The shuttle moves inside the cuboid on a rail system that is attached to the top of the ULs.
    This rail system keeps the shuttle within its designated area (its slice) when it is inside the cuboid, restricts its movement to positions directly above the ULs, and prevents any rotation of the shuttle. 
	Throughout this paper, we assume that side-lifts are located on a single side of the cuboid. 
	Accordingly, all entry and exit points are located on this side, removing the need to make directional decisions for the shuttle. 
	This is a simplifying assumption that is found in practice, for example when the space is restricted from all other sides but one. 
	As a result, each slice can be operated independently, and the operational management of the entire cuboid can thus be decomposed into the operational management of its individual slices.
	Hence, we consider a single slice represented as a sequence of stacks \(T\coloneqq (1, \ldots, m)\), where \(m\coloneqq |T|\). 
    Each stack \(t\in T\) contains ULs stacked to height \(h(t)\). 
	Throughout this paper, stacks are indexed starting at \(1\) on the entry/exit side and increasing in the direction away from it, while heights are indexed from bottom to top.
    Throughout, we use \emph{left} to mean closer to the entry side and \emph{right} to mean farther from it, consistent with the orientation in our figures.
	We are given a customer order consisting of items located across multiple ULs within the slice. 
    To fulfill the order, the shuttle must retrieve a collection of ULs such that the customer order is covered.
	Although identical items may be stored in several ULs, we assume that the choice of which specific ULs to retrieve (target ULs) is made during a preceding planning phase.
    This is a widely used simplification in order retrieval problems, see, e.g., Schiffer et al. \cite{Schiffer2022}.
	Consequently, we are given a \emph{pick-list} of target ULs, denoted by \(\mathcal{B}\).
    For brevity, we will simply refer to them as \emph{targets}, and to the remaining ULs as \emph{non-targets}.
	We let \(n \coloneqq |\mathcal{B}|\).
	For each \(b\in \mathcal{B}\), we let \(s(b)\) denote its stack and \(h(b)\) its initial height, allowing for a natural overload of \(h\). 
    We often refer to the position of a UL by its stack/height pair \((t,h)\).

	The retrieval process is partitioned into a sequence of cycles \(C \coloneqq (1, 2,\ldots )\).
	In a cycle \(c\), the shuttle retrieves a sequence of targets \(Y_c \coloneqq (b_{c,1}, b_{c,2}, \dots, b_{c,|Y_c|})\), where each \(b_{c,i}\in\mathcal{B}\) represents the \(i\)-th target retrieved in cycle \(c\).
	Once a target is retrieved, it is delivered to the external drop-off location and is not returned to the system, and we do not allow the movement of non-targets.
	We let \(h_c(t)\) and \(h_c(b)\) denote the heights of stack \(t\in T\) and target \(b \in \mathcal{B}\) in the beginning of cycle \(c\).
	Specifically, we have that \(h_1(t) \coloneqq h(t)\) and \(h_1(b) \coloneqq h(b)\) for all \(t\) and \(b\).
	To retrieve a target, the shuttle is first elevated vertically by the side-lift to the target’s height and then all ULs that block the shuttle’s path on its horizontal trip are lifted by the top lifts, such that the target can be accessed.
	Thus, at the beginning of cycle \(c\), the operator decides on a \emph{clearance level} \(\ell_c(t) \in\{0,1, \ldots, h_c(t)\}\), for each stack \(t\in T\). 
	All ULs in \(t\) with height at least \(\ell_c(t)\) are lifted and remain lifted throughout the cycle. 
	Let \(e_c(t) \coloneqq h_c(t) - \ell_c(t)\) denote the number of ULs lifted from stack \(t\) in cycle \(c\).
    This temporarily lowers the heights of the stacks during the retrieval cycle.
    For all \(t \in T\), we let \(d_{c,i}(t)\) denote the \emph{residual height} after the first \(i\) elements \(b_{c,1},\dots,b_{c,i}\) of \(Y_c\) have been retrieved.
    Specifically, we have \(d_{c,0}(t) \coloneqq \ell_c(t)\). 
    After the \(i\)-th retrieval (\(i=1,\dots, |Y_c|\)), we update, for all \(t \in T\),
	\[
	d_{c,i}(t)\coloneqq
	\begin{cases}
		d_{c,i-1}(t)-1, &\text{if } t=s(b_{c,i}),\\[4pt]
		d_{c,i-1}(t),   &\text{otherwise}.
	\end{cases}
	\] 
	After the cycle is complete, the lifted ULs are lowered again. 
    If all targets are retrieved at the end of cycle \(c\), the retrieval process completes.
    Otherwise, we initiate cycle \(c+1\) and update heights with \(h_{c+1}(t) \leftarrow d_{c, |Y_c|}(t) + e_c(t)\) for all \(t \in T\).
    The height of a target that has not been retrieved may decrease when targets located in the same stack below it are retrieved during the cycle.
    That is, \(h_{c+1}(b) \leftarrow h_{c}(b) - |\{b'\in Y_c\mid s(b') = s(b), \ h(b') < h(b)\}|\) for all targets \(b\in \mathcal{B}\) that were not retrieved up to and including cycle \(c\). 
    
	A target can only be retrieved if it is \emph{accessible}.
    Suppose that in the \(i\)-th retrieval of cycle \(c\) we retrieve target \(b\in \mathcal{B}\) from its current position \((t,h)\). 
    We say that \(b\) is accessible if and only if the residual heights satisfy the following \emph{accessibility conditions}:
	\begin{enumerate}[label=(\alph*)]
		\item \label{cond:above} There is no UL directly above \(b\) to ensure that the shuttle can retrieve it, i.e. \(d_{c,i-1}(t) = h + 1\). 
		\item \label{cond:passage} All stacks \(s\) to the left of \(t\) must have a residual height of \(h\), so that there is a path for the shuttle, i.e. \(d_{c,i-1}(s) = h\) for \(s < t\).
	\end{enumerate}
	
	An input instance is defined by the tuple
	\(
	\mathcal{I} = (T, \mathcal{B}, (s(b), h(b))_{b\in\mathcal{B}}, (h(t))_{t\in\mathcal{T}}).
	\)
	A solution to an instance \(\mathcal{I}\) is given by a tuple
	\(
	\mathcal{X} = (C, (Y_c, \ell_c(t)_{t\in T})_{c \in C}),
	\) and is \emph{feasible} if and only if every UL is accessible at the moment it is retrieved, and all ULs in the pick-list \(\mathcal{B}\) are retrieved exactly once.
	We aim to minimize the system's total \emph{energy expenditure}. 
	The primary source of energy expenditure arises from lifting ULs. 
	The energy required to lift a UL corresponds to gravitational potential energy, which is computed as \(E = m g h\), where \(m\) denotes mass, \(g\) gravitational acceleration, and \(h\) lifting height. 
	Consequently, the energy required to lift a UL is directly proportional to its weight (mass times gravitational acceleration). 
    For simplicity, we assume that all ULs have uniform weight and are lifted by the same distance, so the total energy expenditure is directly proportional to the number of lifted ULs and their lifting heights.
	Without loss of generality, we further normalize weights so that lifting a single UL by one unit of height corresponds exactly to one unit of energy. 
	Thus, the \emph{energy expenditure} of a solution \(\mathcal{X}\) is defined as:
	\[
	E(\mathcal{X}) \coloneqq \sum_{c\in C} E(c),
	\]
    where \(E(c)\) is the energy expenditure of cycle \(c\), defined by \(E(c)\coloneqq \sum_{t\in T} e_c(t)\),
	where \(e_c(t)\) is the number of elevated ULs from stack \(t\) in cycle \(c\).
	A feasible solution that minimizes the total energy expenditure is referred to as \emph{optimal}. 
    We refer to the problem of finding an optimal solution as the \emph{Side-Access Compact Retrieval Problem} (SACRP).
	Observe that we encode the heights of the stacks in an array.
	We let \(H\) denote the maximum height of any stack in the slice, and the description length of an instance is then \(O((n\log m + m)\log H)\).

\paragraph{Example.}
	We illustrate the SACRP with the example shown in Figure~\ref{fig:example}.  
	Consider an instance, consisting of a layout with four stacks \(T\coloneqq \{1,2,3,4\}\) whose initial heights are
	\((h(1),h(2),h(3),h(4))=(5,4,2,4)\). 
	The pick-list contains six targets; \(\mathcal{B} \coloneqq \{b_{1}, b_{2}, b_{3}, b_{4},b_{5}, b_{6}\}\) whose initial positions are
	\(\bigl (s(b_i),h(b_i)\bigr)_{i = 1,\ldots,6 }
	\coloneqq ((1,4),\ (4,4),\ (1,3),\ (2,3),\ (4,3),\ (1,2)).
	\)
	We consider a solution that consists of two cycles. 
	In the first cycle, we lift the topmost ULs from stacks \(1\), \(2\), and \(4\).
	This allows us to retrieve targets \(b_1\), \(b_3\), \(b_4\), \(b_5\) and \(b_6\).
	In the second cycle, we only lift the topmost UL from stack \(2\) and can then retrieve the remaining target \(b_2\). In the first cycle, we lifted three ULs and in the second cycle we lifted one UL, resulting in an energy expenditure of \(4\).
    \begin{figure}
    \centering
    \begin{subfigure}[t]{.22\textwidth}
        \centering
        \includegraphics[page=2,width=\textwidth]{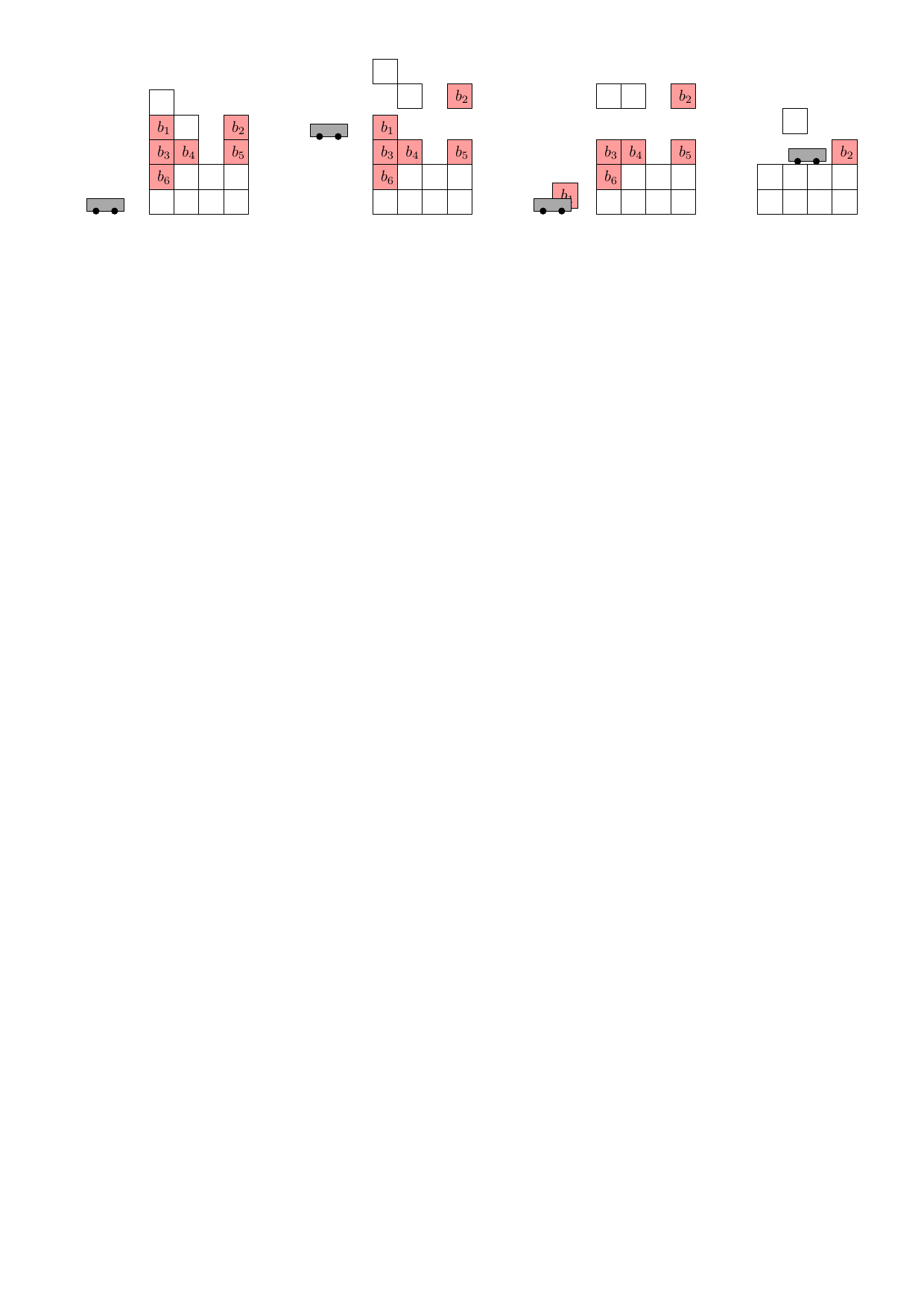}
        \caption{}
        \label{fig:example_sub1}
    \end{subfigure}
    \hfill
    \begin{subfigure}[t]{.22\textwidth}
        \centering
        \includegraphics[page=3,width=\textwidth]{Retrieval_example_2_merged.pdf}
        \caption{}
        \label{fig:example_sub2}
    \end{subfigure}
    \hfill
    \begin{subfigure}[t]{.22\textwidth}
        \centering
        \includegraphics[page=4,width=\textwidth]{Retrieval_example_2_merged.pdf}
        \caption{}
        \label{fig:sub3}
    \end{subfigure}
    \hfill
    \begin{subfigure}[t]{.22\textwidth}
        \centering
        \includegraphics[page=5,width=\textwidth]{Retrieval_example_2_merged.pdf}
        \caption{}
        \label{fig:example_sub4}
    \end{subfigure}
    \caption{Illustration of the entire retrieval process that is described in the example. Targets are highlighted in red. The total energy expenditure of the retrieval process is \(4\). 
    a) shows the initial configuration before any target is retrieved. b) depicts the configuration in the beginning of cycle 1, in which targets \(b_1\), \(b_3\), \(b_4\), \(b_5\) and \(b_6\) are retrieved in a single cycle with an energy expenditure of \(3\). In c) first target of cycle 1 (\(b_1\)) is retrieved. d) shows the configuration in the beginning of cycle 2, in which target \(b_2\) is retrieved with an energy expenditure of \(1\).
    }
    \label{fig:example}
\end{figure}

\subsection{Sparse Encoding}\label{sec:sparse}

    In the previous section, stack heights were encoded as an array. 
    While this is intuitive, it is wasteful for sparse instances, in which the pick list is small relative to the slice. 
    We therefore propose a \emph{sparse encoding} that is more efficient when \(n \ll m\).
    
    Consider an instance and a solution to the SACRP, and let \(c\) be a cycle with retrieval sequence \(Y_c\).  
    For every \(t \in T\), let \(h_r(t)\) denote the height of the topmost target in \(Y_c\) among stacks with \(s > t\) in the beginning of cycle \(c\) and some placeholder value if such target does not exist.  
    Let \(s\) be the rightmost stack containing a target \(b \in Y_c\), and let \(h\) denote the height of the topmost target in stack \(s\).  
    In an optimal solution, the clearance levels follow directly from the choice of \(Y_c\).  
    Specifically, the accessibility conditions imply that for every stack \(t < s\) we require \(\ell_c(t) = h_r(t)\); for stack \(s\) we require \(\ell_c(s) = h+1\); and for every stack \(t > s\) we require \(\ell_c(t) = h_c(t)\).  
    
    Furthermore, let \(X_c\subseteq \mathcal{B}\) denote the set of targets retrieved in the cycles preceding \(c\). 
    We say that target \(b\) is at level \(i\) if exactly \(i\) targets have already been retrieved (i.e., are in \(X_c\)), that are located in the same stack \(s(b)\) and at heights below \(h(b)\).
    We denote the maximum retrieval level for \(b\in \mathcal{B}\) as \(R(b)\), i.e. the number of targets that are initially positioned below \(b\) within the same stack.
    Observe that the pair \((b,i)\) with \(b \in \mathcal{B}\) and \(i \in \{0, \ldots, R(b)\}\) uniquely identifies the height of \(b\).

    Let \((b,i)\) correspond to position \((s,h)\).
    Let \(A(b,i)\) denote the energy expenditure of a cycle, where every UL in position \((s',h')\) with \(h'\geq h\) is lifted, i.e. 
    \begin{equation}\label{eq:energy1}
        A(b, i)=
        (h(s(b))-h(b) - i)+
        \sum_{s<s(b)}\bigl(h(s)-(h(b)-i-1)).
    \end{equation}
    This amount decreases due to:
    \begin{enumerate}[label=(\roman*)] 
        \item Targets retrieved \emph{before} cycle \(c\) that either lie to the left of \(s(b)\) or above \(b\) in stack \(s(b)\),
            \begin{equation}\label{eq:energy2}
            E_{prev}(b,X_c) =
            \{ b' \in X_c \mid s(b') < s(b) \ \text{or} \ (s(b') = s(b) \ \text{and} \ h(b') > h(b)) \},
            \end{equation}
        \item Targets retrieved \emph{within} cycle \(c\) that are to the left of \(s(b)\) and at height at least \(h(b) - 1 - i\),
            \begin{equation}\label{eq:energy3}
            E_{curr}(b,i,Y_c) =
            | \{b' \in Y_c \mid s(b') < s(b),\ h(b') \ge h(b) - 1 - i \} |.
            \end{equation}
    \end{enumerate}
    Putting everything together, we have
    \(
    E(c) = A(b, i)-E_{prev}(b,X_c)-E_{curr}(b,i,Y_c).
    \) Thus, given a solution, the energy of a cycle can be computed directly from \(A(b, i)\).
    
    Now, we give a representation of information that is sufficient to verify whether a UL is accessible without explicitly encoding stack heights as an array.
    To this end, we define \(U\subseteq T\) as the set of stacks containing at least one target.
    For \(b \in \mathcal{B}\), let \(w(b)\) 
    denote the lowest height of stacks \(t < s(b)\) and \(t \in T\setminus U\).
    Observe that the heights of stacks in \(T\setminus U\) remain unchanged; therefore, \(w(b)\) contains all the relevant information needed to verify its accessibility during the retrieval process.
    Furthermore, we encode the stacks in \(U\) as a height array of length at most \(n\).

    Thus, we define the \emph{sparse encoding} of an instance with
    \(
        \mathcal{I}_{\text{sparse}} = (\mathcal{B}, (h(b),s(b), w(b), h(s(b))_{b\in \mathcal{B}}, \mathcal{A}),
    \)
    where \(\mathcal{A} \coloneqq \{A(b,i) \mid b \in \mathcal{B},\; i \in \{0,\ldots, R(b)\}\).
    Observe that \(R(b) \leq n\), and thus the description length is \(O(n^{2}\log m \log H)\). 
    Based on the previous discussion, to encode a solution, it suffices to specify the sequence \(Y_c\); thus we may encode a solution by the tuple \(\mathcal{X}_{\text{sparse}} = (C,(Y_c)_{c\in C})\).
    \section{Structural Properties and Computational Complexity}
    \label{sec:strucProperties}

    In this section we formalize the problem’s structural properties. We characterize feasibility in Section~\ref{sec:feasibility}, and the geometry of a retrieval cycle in Section~\ref{sec:geometry}. 
    We conclude by establishing a complexity lower bound in Section~\ref{sec:complexity}.
	
	\subsection{Characterizing Feasibility}
	\label{sec:feasibility}

    An instance is infeasible when a target is positioned so high that at least one stack between its stack and the entry is short, and there are too few targets beneath it in its own stack to lower it sufficiently to satisfy accessibility condition~\ref{cond:passage}.
	We formalize this idea with the following lemma.
	\begin{lemma}\label{lem:feasible}
		An instance of the SACRP is feasible if and only if for all \(b \in \mathcal{B}\) and all stacks \(t < s(b)\) we have
		\(
		h(t)\ \ge\ h(b) - R(b).
		\)
	\end{lemma}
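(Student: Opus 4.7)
The plan is to prove both directions of the biconditional separately: necessity by showing the inequality must hold in any feasible retrieval, and sufficiency by exhibiting an explicit feasible retrieval when it does.

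For the necessity direction, suppose toward contradiction that $h(t) < h(b) - R(b)$ for some target $b$ and some stack $t < s(b)$. At any moment when $b$ is retrieved, say as the $i$-th retrieval of cycle $c$, its current height $h'$ satisfies $h' \geq h(b) - R(b)$, because $b$'s height drops by exactly one each time a target originally below it in $s(b)$ is retrieved, and there are at most $R(b)$ such targets. Since non-target ULs never move and stacks can only shrink across cycles, $d_{c,i-1}(t) \leq h_c(t) \leq h(t) < h(b) - R(b) \leq h'$, violating accessibility condition~(b). Hence $b$ can never be retrieved, and the instance is infeasible.

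For the sufficiency direction, I will construct a feasible retrieval by processing stacks in decreasing order of index (right to left), and within each stack processing targets in order of increasing initial height (bottom to top), using exactly one retrieval per cycle. When $b$ is processed it is the $(R(b)+1)$-st target extracted from $s(b)$, so all $R(b)$ targets originally below it have just been retrieved and $b$ sits at current height $h' = h(b) - R(b)$. By construction, no target has yet been removed from any stack $t < s(b)$, so $h_c(t) = h(t) \geq h'$ by hypothesis and setting $\ell_c(t) = h'$ yields $d_{c,0}(t) = h'$, which meets condition~(b). For stack $s(b)$ itself, $h_c(s(b)) = h(s(b)) - R(b) \geq h(b) - R(b) = h'$, so lifting all ULs above $b$ makes it the topmost residual element and satisfies condition~(a). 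Stacks to the right of $s(b)$ have already been emptied of targets and require no clearance.

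The main subtlety is the choice of ordering. Right-to-left processing is essential because it preserves the full initial heights $h(t)$ of every stack to the left of $s(b)$ at the precise moment $b$ is retrieved; under a different order, prior retrievals could shrink some $h_c(t)$ below $h(b) - R(b)$ and break condition~(b) even though the static hypothesis holds. Beyond this insight, the argument reduces to routine bookkeeping on how heights evolve across cycles.
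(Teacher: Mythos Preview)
Your proof is correct and follows essentially the same approach as the paper's: the necessity argument traces the same inequality chain \(d_{c,i-1}(t)\le h_c(t)\le h(t)\) together with \(h'\ge h(b)-R(b)\) (you phrase it as a contrapositive, the paper directly), and the sufficiency construction---processing stacks right to left, targets bottom to top within each stack, one cycle per target---is identical to the paper's.
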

	\begin{proof}
		Assume the instance is feasible. Consider a feasible solution and let \(b \in \mathcal{B}\) be retrieved in some cycle \(c\) at step \(i\) from its current position \((s(b), h_c(b))\).
		By accessibility condition \ref{cond:passage}, for every stack \(t < s(b)\), we must have \(d_{c,i-1}(s)=h_c(b)\).
		Since \(d_{c,i-1}(t)\le h_c(t)\le h(t)\), it follows that
		\(
		h(t) \ge\ h_c(b).
		\)
		Moreover, the height of \(b\) can drop before it is retrieved only when target below it in the same stack are removed. Hence
		\(h_c(b)\ge h(b)-R(b)\).
		Putting both inequalities together gives
		\(
		h(t) \ge h(b)-R(b),
		\)
		thereby showing the only-if-claim.
		
		Assume that for every \(b\in\mathcal{B}\) and \(t<s(b)\) we have \(h(t)\ge h(b)-R(b)\).
		We construct a feasible solution creating one cycle for each target \(b \in \mathcal{B}\),
        retrieving targets from right to left and within a stack from bottom to top.
		Consider some target \(b\) in stack \(t\) at the moment it is to be retrieved in cycle \(c\). 
		Observe that the number of target below \(b\) in stack \(t\) that have already been retrieved is by construction \(R(b)\) and then current height of \(b\) is \(h_c(b)=h(b)-R(b)\).
		Because stacks to the left have not yet been processed under this schedule, for each \(s<t\) we have \(h_c(s)=h(s)\).
		By the assumption, we have 
		\(
		h_c(s)=h(s) \ge h(b)-R(b)  = h_c(b).
		\)
		We choose the clearance levels in cycle \(c\) such that, for \(t = s(b)\), we have \(\ell_c(s(b)) = h_c(b)+1 \);
		for \(t < s(b)\), we have
		\(\ell_c(t) = h_c(b)\); and for \(t >s(b)\), we have \(\ell(t) = h_c(t)\).
		Observe that these are valid since \(\ell_c(s)\le h_c(s)\) for all \(s\) and \(b\) is accessible.
		Iterating this argument over all targets completes a feasible retrieval sequence, proving the if-claim.
	\end{proof}
	
	The procedure used as an argument in the previous proof gives us an algorithm to compute a feasible solution with runtime that is dominated by sorting the ULs, which can be done in \(O(n\log n )\) time. 
	We can thus conclude that identifying and constructing feasible instances of SACRP is easy and will assume that input instances are feasible throughout the paper.
	
	\subsection{Characterizing Retrieval Cycles}
	\label{sec:geometry}

	Consider an SACRP instance and a feasible solution. 
	Consider some cycle \(c\) and consider the sequence \(Y_c\).
    We denote by \(Y\) the set of targets in \(Y_c\) and refer to this set as a \emph{batch}.
	We propose a number of technical lemmas that characterize the shape of \(Y\).
	
	We list the distinct heights \(H\coloneqq\{h_c(b)\mid b\in Y\}\) in increasing order as \(h_1<\cdots<h_p\) and, for all \(i=1,\ldots,k\), define
	\(
	s_i \coloneqq \max\{s(b)\mid b\in Y,\ h_c(b)=h_i\},
	\)
	i.e., the maximum stack index of any target retrieved at height \(h_i\).
	
	\begin{lemma} \label{lem:continuous}
		It holds that \(h_i = h_{i+1} + 1\) for all \(i\), i.e., the heights of \(Y\) form a consecutive interval. In other words, \(Y\) is height-continuous.
	\end{lemma}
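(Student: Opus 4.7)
The plan is to argue by contradiction. Suppose the distinct heights $h_1<\cdots<h_p$ of targets in $Y$ admit a gap, so that some index $j$ satisfies $\Delta := h_{j+1}-h_j\ge 2$. Pick witness targets $b^*,b^{**}\in Y$ with $h_c(b^*)=h_{j+1}$ and $h_c(b^{**})=h_j$; both exist by the definition of the $h_i$.

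The core idea is to track a single scalar throughout cycle $c$, namely the residual height $d_{c,i}(1)$ of the leftmost stack as $i$ ranges over the steps. By the update rule, this sequence is monotonically non-increasing and drops only when a target in stack~$1$ is retrieved; in that case a retrieval at height $h$ takes $d_{c,i-1}(1)=h+1$ to $d_{c,i}(1)=h$. Using accessibility conditions~\ref{cond:above}-\ref{cond:passage}, I would show that right after $b^*$ is retrieved the residual of stack~$1$ equals $h_{j+1}$, while right before $b^{**}$ is retrieved it is at most $h_j+1$. Both bounds hold uniformly whether the corresponding witness target sits in stack~$1$ (where condition~\ref{cond:above} contributes an extra $+1$ that is cancelled by the retrieval) or in a strictly interior stack (where condition~\ref{cond:passage} pins the residual directly). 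Monotonicity then forces the retrieval of $b^*$ to precede that of $b^{**}$, since otherwise the residual of stack~$1$ would have to increase.

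Between these two steps, the residual of stack~$1$ must therefore drop by at least $\Delta-1\ge 1$, starting from $h_{j+1}$. By the update rule, each unit drop is witnessed by a retrieval in stack~$1$ at a very specific height: the first forces a target at height $h_{j+1}-1$, the next at $h_{j+1}-2$, and so on. Every such required height lies in the open interval $(h_j,h_{j+1})$, which by the gap assumption contains no element of $\{h_1,\dots,h_p\}$, and hence no target of $Y$; this contradiction closes the proof. The main obstacle here is bookkeeping rather than any single hard step: focusing on the leftmost stack~$1$ is precisely what absorbs the $+1$ off-by-ones of condition~\ref{cond:above} into a clean uniform bound and sidesteps a case analysis over whether $s(b^*)$ or $s(b^{**})$ coincides with $1$ or with each other.
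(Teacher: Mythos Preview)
Your argument is correct and arrives at the same contradiction as the paper: a unit load at a gap height in a leftward stack would have to be simultaneously present (to access the high witness via condition~\ref{cond:passage}) and cleared (to access the low witness via conditions~\ref{cond:above}--\ref{cond:passage}). The paper phrases this statically---it points directly at the UL at height $h_c(b)+1$ and invokes the fact that clearance levels are fixed for the whole cycle, so no ordering argument is needed---whereas you track the residual of stack~$1$ as a monotone scalar, first pin down the retrieval order, and then exhibit the forbidden retrieval explicitly; this is a little longer but spells out the dynamics more carefully and does not lean on the fixity of clearance levels.
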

	\begin{proof}
		Assume otherwise, and consider some \(b \in Y\) such that no target is retrieved at height \(h_c(b) + 1\), while a target \(b' \in Y\) is retrieved at height \(h_c(b') >  h_c(b) + 1\).
		Observe that to retrieve \(b'\), accessibility condition \ref{cond:passage} implies that there are ULs at height \(h_c(b) + 1\) in stacks \(1, 2, \ldots, s(b')\) that are not lifted.
		However, to retrieve \(b\), accessibility conditions~\ref{cond:above} and~\ref{cond:passage} require that at height \(h_c(b)+1\) in stacks \(s \le s(b)\), ULs are either absent or lifted.
		This yields a contradiction and proves the claim.
	\end{proof}
	
	\begin{lemma} \label{lem:unimodal}
		There exists an index \(\ell\in\{1,\ldots,k\}\) such that \(s_1 \le s_2 \le \cdots \le s_\ell \ge s_{\ell+1} \ge \cdots \ge s_k\). In other words, \(Y\) is stack-unimodular.
	\end{lemma}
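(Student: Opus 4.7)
The plan is to reduce stack-unimodality to two monotonicity properties on the per-stack retrieval ranges and then to exclude any local valley in $s_1,\ldots,s_k$ by a short case analysis.

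First I fix the cycle $c$ and, for every stack $t$ that contributes a target to $Y$, collect its retrievals into an interval $[a_t,b_t]$ of heights. Because $d_{c,i}(t)$ decreases by exactly one whenever a target is removed from $t$, and because the target retrieved from $t$ at step $i$ sits at height $d_{c,i-1}(t)-1$ by condition~\ref{cond:above}, the retrievals in stack $t$ must form a top-down consecutive run $b_t,b_t-1,\dots,a_t$, the clearance must satisfy $\ell_c(t)=b_t+1$, and $d_{c,\cdot}(t)$ takes precisely the values $\{a_t,a_t+1,\dots,b_t+1\}$ during the cycle. From this observation I derive two monotonicity properties for every pair of stacks $s<t$ with retrievals in cycle $c$: (P1) $a_s\le a_t$ and (P2) $b_t\le b_s+1$. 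Both fall out of condition~\ref{cond:passage} applied at the moments of retrieving $(t,a_t)$ and $(t,b_t)$: the value of $d_{c,\cdot}(s)$ must equal $a_t$ and $b_t$, respectively, and so both must lie inside the passing window $\{a_s,\dots,b_s+1\}$.

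It remains to rule out a local valley. A sequence is unimodal iff it has no triple of consecutive indices $j-1,j,j+1$ with $s_{j-1}>s_j<s_{j+1}$, so I suppose such a triple exists; by Lemma~\ref{lem:continuous} the corresponding heights are $h_j-1,h_j,h_j+1$. Since $s_j$ is the rightmost stack retrieving at $h_j$, neither $s_{j-1}$ nor $s_{j+1}$ contains a retrieval at height $h_j$, while both of them do retrieve at $h_j-1$ and $h_j+1$, respectively. This pins $b_{s_{j-1}}=h_j-1$ and $a_{s_{j+1}}=h_j+1$. The contradiction is a three-way case split using (P1) and (P2): if $s_{j-1}<s_{j+1}$, then (P2) gives $b_{s_{j+1}}\le b_{s_{j-1}}+1=h_j<h_j+1\le b_{s_{j+1}}$; if $s_{j-1}>s_{j+1}$, then (P1) gives $a_{s_{j-1}}\ge a_{s_{j+1}}=h_j+1>h_j-1\ge a_{s_{j-1}}$; and $s_{j-1}=s_{j+1}$ would force a single stack with $a=h_j+1>h_j-1=b$, which is impossible.

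The main obstacle I anticipate is simply the bookkeeping step of identifying the exact set of values $d_{c,\cdot}(s)$ can attain over the course of the cycle and justifying $\ell_c(s)=b_s+1$; once this passing window $\{a_s,\dots,b_s+1\}$ is nailed down, (P1) and (P2) are one-liners and the valley argument collapses to the routine case split above.
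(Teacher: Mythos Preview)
Your approach is genuinely different from the paper's. The paper argues directly: given indices \(i<j<k\) with \(s_i,s_k>s_j\), it looks at the ULs sitting at height \(h_j\) in the stacks strictly to the right of \(s_j\) and observes that at the moment \(b_i\) is retrieved these ULs must be gone from the residual stacks, while at the moment \(b_k\) is retrieved they must be present---a contradiction. Your route instead extracts the per-stack retrieval intervals \([a_t,b_t]\) and the two monotonicity properties (P1) \(a_s\le a_t\) and (P2) \(b_t\le b_s+1\) for \(s<t\); this is more structured and the intermediate statements (the ``passing window'' and (P1),(P2)) are reusable, at the cost of a slightly longer write-up. Both derivations of the contradiction are sound in spirit.

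There is, however, a genuine gap in your reduction step. The claim ``a sequence is unimodal iff it has no triple of \emph{consecutive} indices with \(s_{j-1}>s_j<s_{j+1}\)'' is false in general: the sequence \(3,1,1,2\) has no strict local minimum at three consecutive indices, yet it is not unimodal. So ruling out only strict consecutive valleys does not, on its own, give unimodality. The fix is immediate with the tools you already built: instead of a single index \(j\), take a maximal flat valley \(s_{j'-1}>s_{j'}=s_{j'+1}=\cdots=s_j<s_{j+1}\). Then \(s_{j'-1}>s_j\) is still to the right of the rightmost stack at heights \(h_{j'},\dots,h_j\), so it retrieves at \(h_{j'}-1\) but not at \(h_{j'}\), giving \(b_{s_{j'-1}}=h_{j'}-1\); symmetrically \(a_{s_{j+1}}=h_j+1\). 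Your three-case split on \(s_{j'-1}\lessgtr s_{j+1}\) then goes through verbatim, since \(h_{j'}\le h_j\) only strengthens the inequalities. With this adjustment, the argument is correct.
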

	\begin{proof}
		Assume otherwise and consider some  \(i,j,k \in \{1,2,\ldots, p\}\), such that \(h_i < h_j < h_k\) and \(s_i > s_j\) and \(s_k > s_j\). 
		Let \(b_i, b_j, b_k\in Y\) denote the targets at positions \((h_i,s_i)\), \((h_j,s_j)\) and \((h_k,s_k)\), respectively.
		
		To retrieve \(b_i\), accessibility condition \ref{cond:passage} implies that all ULs at height \(h_j\) in stacks \(s_j, s_j + 1\ldots, s_i\) are either absent or lifted. 
		Note that these ULs are not retrieved because, by assumption, they are not in \(Y\).
		To retrieve \(b_k\), accessibility conditions \ref{cond:above} and \ref{cond:passage} imply that ULs at height \(h_j\) in stacks \(s_j, s_j + 1\ldots, s_k\) are neither absent nor lifted.
		This yields a contradiction and proves the claim.
	\end{proof}
	
	\begin{lemma} \label{lem:prefix}
		Let \(\ell\) be defined as in Lemma~\ref{lem:unimodal}. 
		For each \(1 \le i < \ell\), every UL that lies at height \(h_{i+1}\) and in stacks \(1, 2, \ldots, s_i\) belongs to \(Y\) or is absent.
		For each \(\ell < i \le k\), every UL that lies on height \(h_{i-1}\) and in stacks \(1, 2, \ldots, s_i\) belongs to \(Y\) or is absent.
		In other words, \(Y\) is prefix-closed.
	\end{lemma}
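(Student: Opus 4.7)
The two claims are symmetric, corresponding to the left and right arms of the unimodal profile established in Lemma~\ref{lem:unimodal}; I focus on Part~1 and note that Part~2 is handled by an analogous argument in which the indices $i-1$ and $i+1$ swap roles. Fix $i$ with $1 \le i < \ell$. By the definitions of $s_i$ and $s_{i+1}$, both $(s_i, h_i)$ and $(s_{i+1}, h_{i+1})$ are targets in $Y$; let $\tau_i$ and $\tau_{i+1}$ denote the steps at which they are retrieved in cycle $c$.

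The strategy is to compare the residual heights at these two retrievals and interpolate through the intermediate evolution of $d$. Accessibility condition~\ref{cond:passage} applied at step $\tau_i$ yields $d_{c,\tau_i-1}(t') = h_i$ for every $t' < s_i$, while applied at step $\tau_{i+1}$ it yields $d_{c,\tau_{i+1}-1}(t') = h_{i+1}$ for every $t' < s_{i+1}$; condition~\ref{cond:above} at $\tau_i$ additionally gives $d_{c,\tau_i-1}(s_i) = h_i + 1 = h_{i+1}$. Since $d_{c,\cdot}(t)$ is non-increasing within a cycle and $h_{i+1} > h_i$, the step $\tau_{i+1}$ necessarily precedes $\tau_i$ on every stack $t < s_i$.

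Now fix $t$ with $1 \le t \le s_i$. For $t < s_i$, between $\tau_{i+1}$ and $\tau_i$ the value $d(t)$ decreases from $h_{i+1}$ to $h_i$, a net change of exactly one by Lemma~\ref{lem:continuous}. Since $d(t)$ decrements only when a target in stack $t$ is retrieved, there is exactly one such retrieval in the interval, and applying condition~\ref{cond:above} to it identifies the height of the responsible target precisely; chaining this argument across the full history in stack $t$, from the initial clearance $\ell_c(t)$ through step $\tau_i$, shows that every UL originally positioned at $(t,h_{i+1})$ is either retrieved as part of $Y$ or was never there. The boundary case $t = s_i$ is handled by the same retrieval-count argument inside stack $s_i$, using the equality $d_{c,\tau_i-1}(s_i) = h_{i+1}$ in place of condition~\ref{cond:passage}. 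Part~2 is proved by the symmetric argument applied to the pair $(\tau_{i-1},\tau_i)$: there $d(t)$ drops from $h_i$ to $h_{i-1}$, forcing the UL at $(t,h_{i-1})$ to belong to $Y$ or be absent. The main obstacle is pinning down exactly which retrieval is responsible for each drop of $d(t)$; this is precisely where the consecutive-height property of Lemma~\ref{lem:continuous} is indispensable, since without it a drop between $\tau_{i+1}$ and $\tau_i$ could decompose into several retrievals at heights we could no longer identify from conditions~\ref{cond:above}--\ref{cond:passage} alone.
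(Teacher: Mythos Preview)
Your approach is correct in spirit but takes a noticeably different and more elaborate route than the paper. The paper argues by a one--line contradiction: pick the target $b$ at height $h_i$ in stack $s_i$, apply condition~\ref{cond:passage} at the moment $b$ is retrieved, and conclude that every UL at height $h_{i+1}$ in stacks $1,\ldots,s_i$ has already left the residual stack, hence belongs to $Y$ or was absent. Your argument instead sandwiches $d(t)$ between its values at the two retrieval times $\tau_{i+1}$ and $\tau_i$, counts the single decrement in between, and then ``chains'' over the full history of stack $t$. What your extra step at $\tau_{i+1}$ buys is an explicit lower bound $d_{c,0}(t)\ge h_{i+1}$, which rules out the possibility that the UL at $(t,h_{i+1})$ was merely \emph{lifted} rather than retrieved --- a case the paper's proof silently skips over. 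In that sense your argument is actually more complete.

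That said, your presentation has a soft spot. After pinning down the unique retrieval between $\tau_{i+1}$ and $\tau_i$, you invoke condition~\ref{cond:above} to ``identify the height precisely'' and then appeal to a ``chaining'' argument to reach the desired conclusion about position $(t,h_{i+1})$; neither step is spelled out, and with the paper's stated version of condition~\ref{cond:above} the height you would literally read off is $h_i$, not $h_{i+1}$. The clean way to finish --- and what your $\tau_{i+1}$ step already sets up --- is simply: $d_{c,0}(t)\ge h_{i+1}$ shows the UL at $(t,h_{i+1})$ is not lifted, while $d_{c,\tau_i-1}(t)=h_i<h_{i+1}$ shows it is no longer in the residual stack; since lifting is fixed at the start of the cycle, it must have been retrieved. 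No chaining is needed, and the argument becomes both shorter and independent of the off--by--one ambiguity in condition~\ref{cond:above}.
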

	\begin{proof}
        Assume otherwise and consider some \(i \in \{1, 2, \ldots, \ell-1 \}\)
		Let \(1 \le i < \ell\) and assume there exist a UL at position \((h_{i+1},t)\) with \(t \le s_i\) that does not belong to~\(Y\).    
		By construction, we know that there exists a target \(b\in Y\) with \(s(b) \geq  t\) on height \(h_i\).
		To retrieve \(b\), accessibility condition \ref{cond:passage} implies that ULs on height \(h_{i+1}\) in stacks \(1, 2, \ldots, {t}\) are retrieved earlier, i.e. in \(Y\), or are absent.
		This yields a contradiction and shows the claim.
		The remaining case \(\ell < i \le k\) is analogous.
	\end{proof}
	
	\begin{lemma} \label{lem:batchGeom}
		Consider a batch \(Y\) and assume that for every \(b\in Y\) and every stack \(t<s(b)\) we have \(h(t) \ge h(b)\).
		If \(Y\) is height-continuous, stack-unimodular, and prefix-closed, then there exists a sequence, such that every target UL is accessible. In other words, \(Y\) is feasible.
	\end{lemma}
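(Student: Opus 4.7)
The plan is to exhibit an explicit clearance assignment and retrieval order, then verify accessibility conditions \ref{cond:above} and \ref{cond:passage} directly. For each stack $t$, I would set
\[
\ell_c(t) = \max\bigl(\{h(b)+1 : b \in Y,\ s(b)=t\} \cup \{h(b) : b \in Y,\ s(b)>t\}\bigr)
\]
when this set is nonempty, and leave $t$ untouched otherwise; the height hypothesis $h(t) \ge h(b)$ for $t < s(b)$ guarantees that this clearance fits within stack $t$. For the retrieval order, I would process the targets top-down and left-to-right: in strictly decreasing order of $h(b)$, ties broken by strictly increasing $s(b)$.

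Before verifying accessibility, I would establish the following structural refinement: in every stack $t$, the heights of targets of $Y$ lying in $t$ form a contiguous interval of the height sequence $h_1 < \cdots < h_k$. I would prove this by contradiction using prefix-closedness together with the no-gap property of stacks. A purported missing height $h_m$ between two target heights in stack $t$ must be an actual UL by the no-gap property; prefix-closedness then, applied with $i = m-1 < \ell$ on the ascending side or $i = m+1 > \ell$ on the descending side and using the monotonicity of $(s_i)$ to secure the bound $t \le s_i$, forces $(t, h_m) \in Y$ or absent, contradicting the assumption.

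With the refinement in hand, verifying accessibility reduces to bookkeeping. For a target $b$ at $(s, h)$ about to be retrieved, the residual at any stack $s'$ equals $\ell_c(s')$ minus the number of $Y$-targets in $s'$ already retrieved, which by the order is the number of $Y$-targets in $s'$ at heights strictly greater than $h$, plus one more when $s' < s$ and $(s', h) \in Y$. Combined with the refinement, this count aligns exactly with the clearance formula to give $d(s) = h + 1$ (condition \ref{cond:above}) and $d(s') = h$ for every $s' < s$ (condition \ref{cond:passage}). The main obstacle is the structural refinement itself: while geometrically intuitive, ruling out gaps is delicate because prefix-closedness is asymmetric between the ascending and descending halves of the unimodal contour, and for each hypothetical gap one must verify that the applicable side's condition actually activates via the monotonicity of $s_i$; once contiguity is secured, the remaining arithmetic is routine.
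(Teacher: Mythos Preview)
Your approach is essentially the paper's: the same top-down, left-to-right retrieval order and (after unwinding) the same clearance levels, followed by a direct check of conditions \ref{cond:above} and \ref{cond:passage}. Your unified clearance formula $\max\bigl(\{h(b)+1 : s(b)=t\}\cup\{h(b) : s(b)>t\}\bigr)$ coincides with the paper's case split once one observes that, for a non-empty stack $t$, prefix-closedness together with the height hypothesis forces $\max\{h(b):s(b)>t\}\le h_{\mathrm{top}}(t)+1$; the paper simply writes the two cases separately rather than taking the max.

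The one genuine addition in your route is the contiguity refinement (that the $Y$-heights in any fixed stack form an interval), which you prove correctly and which the paper does not isolate. This lemma lets you express the residual at each stack as a clean arithmetic formula and reduce the accessibility check to counting, whereas the paper argues each condition by appealing to the three defining properties in turn. Your presentation is therefore slightly more structured, at the cost of an extra lemma; the paper's is shorter but leaves more to the reader. One caution: your final ``routine bookkeeping'' still implicitly uses facts beyond bare contiguity---for instance, that for non-empty $s'$ no target to the right lies more than one level above $h_{\mathrm{top}}(s')$, and that an empty stack cannot sit to the left of targets at two distinct heights---both of which require a further appeal to prefix-closedness and the height hypothesis. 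These are exactly the points the paper glosses over as well, so your sketch is at the same level of detail as the original, but you should be aware that contiguity alone does not close the argument.
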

	\begin{proof}
		We say a stack is \emph{empty} if it does not contain any target from \(Y\). For each stack \(t\), we define the lift thresholds as follows. 
		If \(t\) is non-empty, let \(\ell_c(t)\coloneqq \{ h_c(b) + 1 \mid b \in s(b)\}\), i.e., lift everything above the topmost target UL in stack \(t\).  
		If \(t\) is empty and there exists a target UL in stacks to its right, let \(\ell_c(t)\coloneqq \max \{ h_c(b) \mid b\in Y,\ s(b)>t\}\), i.e., lift all ULs above height \(\ell_c(t)\), which equals the maximum residual height of any target UL in \(Y_c\) to the right of \(t\).  
		Otherwise, set \(\ell_c(t)=h_c(t)\), i.e., do not lift any ULs from stack \(t\).
		
		We define the retrieval sequence as follows. 
        We retrieve the targets from top to bottom, and within each height from left to right.
        
		Assume we are at an arbitrary step \(i\) of the retrieval process and let \(b\) denote the next target to be retrieved.
		We claim that \(b\) is accessible.
		To see this, assume that \(h(b) = \max H\).
		By construction, we know that accessibility condition~\ref{cond:above} is satisfied.
		Furthermore, again by construction, we know that every empty stack to the left of \(b\) has a height of at least \(h(b)\).
		Furthermore, every target UL in \(Y_c\) at height \(h(b)\) and to the left of \(b\) is retrieved in an earlier step.
		Therefore, accessibility condition~\ref{cond:passage} is satisfied and \(b\) is accessible.
		Assume that \(h(b) < \max H\).
		Because \(Y_c\) is height-continuous, stack-unimodular, and prefix-closed, we know that the UL at position \((s(b), h(b)+1)\) is either lifted, already retrieved, or does not exist.
		Thus, accessibility condition~\ref{cond:above} is satisfied.
		Furthermore, by prefix-closure and our choice of the retrieval sequence, we know that every UL at position \((t, h(b))\) for \(t < s(b)\) is either lifted, already retrieved or does not exist.
		Therefore, accessibility condition~\ref{cond:passage} is satisfied, and \(b\) is accessible.
	\end{proof}

    We call a batch \emph{weakly triangular} if it is height-continuous, stack-unimodular, and prefix-closed.
	Lemmas~\ref{lem:continuous} to \ref{lem:batchGeom} imply that choosing a lifting configuration and retrieval sequence reduces to selecting weakly triangular batches. 
    Thus, we may restrict the solution space so that each cycle consists of a weakly triangular batch.
    To give the reader more intuition, we depict some weakly triangular batches in Figure~\ref{fig:geometry}.
    
\begin{figure}[t]
    \centering
    \begin{subfigure}[t]{0.42\textwidth}
        \centering
        \includegraphics[page=1,width=.7\linewidth]{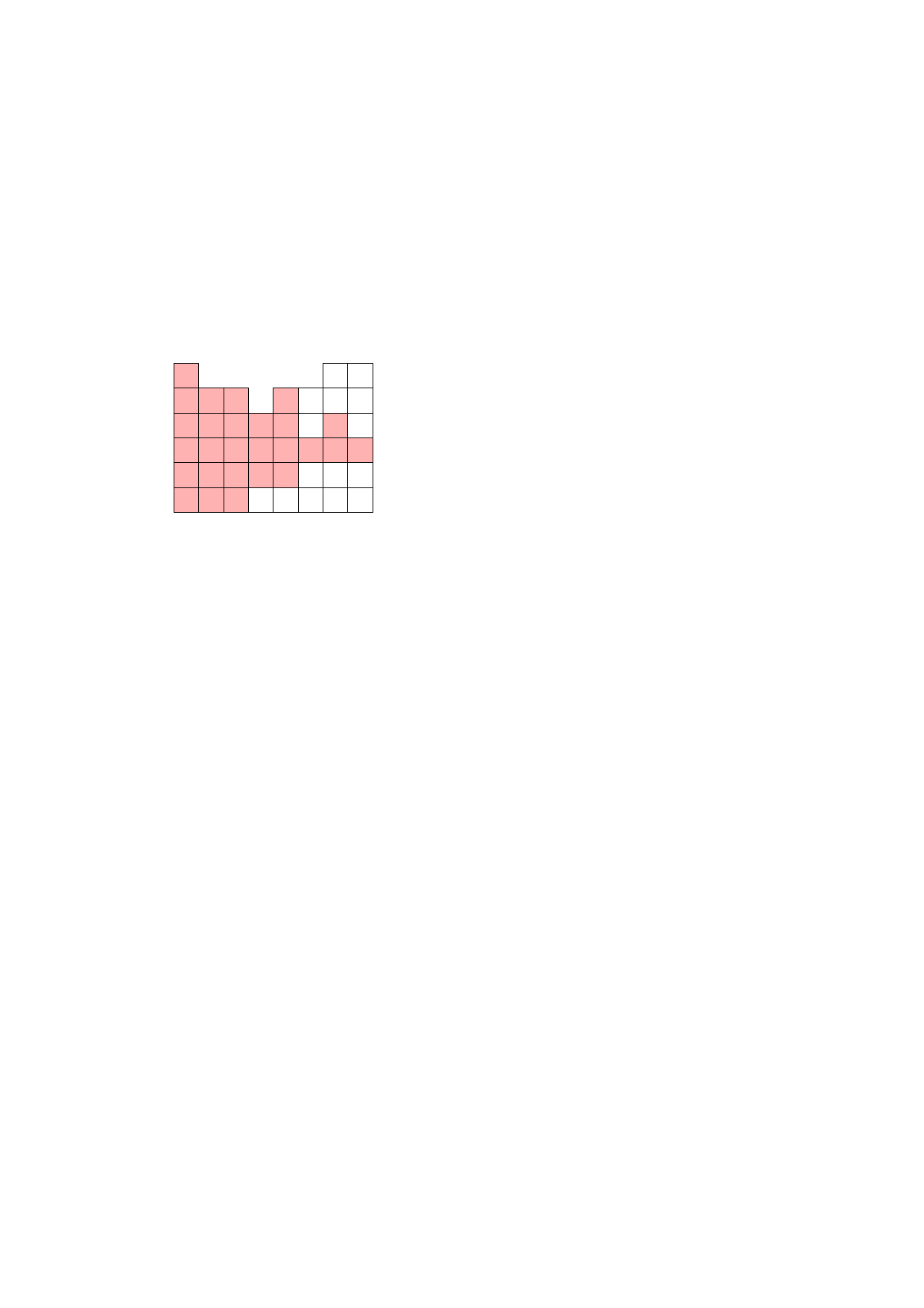}
        \caption{Example 1: weakly triangular batch \(Y\).}
        \label{fig:geometry1}
    \end{subfigure}\hfill
    \begin{subfigure}[t]{0.42\textwidth}
        \centering
        \includegraphics[page=2,width=.7\linewidth]{geometry.pdf}
        \caption{Example 2: weakly triangular batch \(Y\).}
        \label{fig:geometry2}
    \end{subfigure}

    \vspace{0.6em} 

    \begin{subfigure}[t]{0.42\textwidth}
        \centering
        \includegraphics[page=3,width=.7\linewidth]{geometry.pdf}
        \caption{Example 3: weakly triangular batch \(Y\).}
        \label{fig:geometry3}
    \end{subfigure}\hfill
    \begin{subfigure}[t]{0.42\textwidth}
        \centering
        \includegraphics[page=4,width=.7\linewidth]{geometry.pdf}
        \caption{Example 4: weakly triangular batch \(Y\).}
        \label{fig:geometry4}
    \end{subfigure}

    \caption{Illustration of four examples of weakly triangular batches \(Y\). Targets included in \(Y\) are highlighted in red.}
    \label{fig:geometry}
\end{figure}

	\subsection{Strong NP-Completeness of the SACRP}
	\label{sec:complexity}
	
	Our next result shows that it is unlikely that there exists a polynomial-time algorithm that computes an optimal solution. 
	The underlying reason is that removing lower targets may align the heights of targets that lie above, so targets that initially cannot be retrieved in one cycle can be retrieved in a single cycle.
	Thus, there are cases in which we may make seemingly costly decisions early on to create a specific structure, which later proves to be cost-efficient to clear.
	The decision of which targets to group together has a strong connection to partition problems, which is shown in the following result.
	\begin{theorem}
		The SACRP is strongly NP-complete.
	\end{theorem}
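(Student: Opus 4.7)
The plan is to reduce from 3-PARTITION, which is strongly NP-complete. Membership in NP is immediate: given a candidate solution we can simulate each cycle in polynomial time, verifying accessibility conditions~\ref{cond:above}--\ref{cond:passage}, confirming that every target in $\mathcal{B}$ is retrieved exactly once, and summing the lifted ULs to check the energy bound.

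For the hardness direction, I would start from a 3-PARTITION instance with integers $a_1, \ldots, a_{3n}$ satisfying $\sum_i a_i = nB$ and $B/4 < a_i < B/2$, and construct an SACRP instance whose optimum equals a threshold $V^\star$ iff the partition instance is a YES-instance. The gadget would consist of three parts: (i) $n$ \emph{anchor} targets placed in the rightmost stack at heights so widely separated, with tall non-target padding in between, that no two anchors can be retrieved in the same cycle; (ii) $3n$ \emph{item} target columns to the left, one per integer, so that piggybacking item $i$ onto an anchor cycle adds an energy contribution that scales linearly with $a_i$; (iii) filler non-target ULs calibrated so that a cycle combining a single anchor with a set $S$ of items incurs cost $V_0 + \max\{0,\ \sum_{i \in S} a_i - B\}$, i.e., the per-cycle slack is exactly consumed when $\sum_{i \in S} a_i = B$. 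Setting $V^\star \coloneqq n V_0$, the forward direction converts a valid partition into a schedule of $n$ cycles, each combining one anchor with three items of total size $B$; Lemma~\ref{lem:batchGeom} certifies feasibility and energy exactly $V^\star$. The reverse direction uses Lemmas~\ref{lem:continuous}--\ref{lem:prefix} to show that any schedule of energy at most $V^\star$ must consist of exactly $n$ weakly triangular cycles, each tied to one anchor, with items summing to at most $B$ per cycle; since all $nB$ item-mass must be retrieved, equality is forced on every cycle, and the bounds $a_i \in (B/4, B/2)$ force exactly three items per cycle. Because every numerical parameter in the construction is polynomial in the unary size of the 3-PARTITION instance, the reduction is polynomial and preserves unary encodings, yielding strong NP-completeness.

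The main obstacle will be step (iii): calibrating the filler so that the energy induced by the weakly triangular shape of a cycle behaves \emph{additively} in the $a_i$'s of the piggybacked items. A weakly triangular batch may extend both leftward and upward in irregular staircase patterns, so a priori the cost depends on the batch geometry rather than on a clean sum. My intended fix is to place tall non-target \emph{barrier} columns between consecutive item columns so that any feasible weakly triangular batch containing an anchor is forced to take the shape of a union of narrow rectangles pinned to the rightmost stack, one per chosen item, thereby collapsing the two-dimensional geometry of Section~\ref{sec:geometry} to a one-dimensional sum of item contributions plus a fixed anchor term. Verifying that no exotic batch shape (e.g., one spanning multiple anchor heights via the unimodality index $\ell$ of Lemma~\ref{lem:unimodal}, or one that retrieves items in extra anchor-free cycles) can undercut $V^\star$ when the partition fails is the delicate technical core of the argument and will require a careful case distinction over the admissible profiles allowed by Lemmas~\ref{lem:continuous}--\ref{lem:prefix}.
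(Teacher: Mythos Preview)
Both you and the paper reduce from 3-Partition, but the gadgets differ substantially. You encode each $a_i$ as an additive \emph{energy} contribution and aim for a per-cycle cost of the form $V_0+\max\{0,\sum_{i\in S}a_i-B\}$. The paper instead encodes each $a_i$ as a \emph{block of $a_i$ consecutive targets} in a single tall stack (stack~1): the $3m$ blocks $X_1,\dots,X_{3m}$ are separated by single non-targets, and above them sit $m$ further targets $y_1,\dots,y_m$ spaced $T$ non-targets apart; a third stack holds $m$ consecutive targets $z_1,\dots,z_m$, while stack~2 contains no targets at all. A huge cap of $L$ non-targets on top of stack~1 makes every cycle cost at least $L$, so the energy threshold $L(4m{+}1)$ is effectively a bound on the \emph{number of cycles}, not on per-cycle cost. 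Since the $X_i$ and the $y_j$ form $4m$ vertically disjoint segments in stack~1, Lemma~\ref{lem:continuous} forces at least $4m$ cycles; since stack~2 is target-free, Lemma~\ref{lem:prefix} allows at most one $z_j$ per cycle; hence each $z_j$ must share a cycle with some $y_j$, which by accessibility requires $y_j$ to have dropped exactly $T$ levels---i.e., a triple of $X$-blocks with sizes summing to $T$ was removed beforehand.

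This construction sidesteps precisely the obstacle you flag: the paper never needs energy to be additive in the $a_i$, because additivity is automatic in the \emph{height-reduction} mechanism (removing $k$ targets from a stack lowers everything above by exactly $k$). Your intended cost function is also harder to realize than you acknowledge: SACRP energy is a plain sum of lifted ULs, so a slack-then-overflow $\max\{0,\cdot\}$ does not arise naturally, and your barrier-column fix is in tension with your own goal---if each item $i$ occupies a vertical block of $a_i$ targets and tall non-target barriers separate the item columns, then prefix-closure (Lemma~\ref{lem:prefix}) will typically forbid a single weakly triangular batch from extending below the anchor height into more than one item column, so you cannot piggyback several items onto one anchor. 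The paper's cycle-counting device together with the $y_j$--$z_j$ alignment constraint avoids all of this.
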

	\begin{proof}
		Under either the dense or the sparse encoding, a candidate solution together with an integer bound \(K\) is a polynomial-size certificate: by simulating the cycles, we can in polynomial time verify (i) that each target is accessible when retrieved (conditions \ref{cond:above} and \ref{cond:passage}), (ii) that every target is retrieved exactly once, and (iii) that the energy expenditure \(E(\mathcal{X})\) satisfies \(E(\mathcal{X}) \le K\). Hence, SACRP belongs to NP.
		
		In what follows, we reduce the 3-Partition Problem, which is well-known to be strongly NP-complete (see Garey and Johnson \cite{garey1979computers}) to the SACRP.
		Here, we are given a multiset of \(3m\) positive integers \(A = \{a_1, a_2, \dots, a_{3m}\}\) and an integer \(T\), with the constraints \(T/4 < a_i < T/2\) for each \(1 \leq i \leq 3m\) and \(\sum_i a_i = mT\). 
		We may assume that all numbers are even.
		We consider the decision variant of the 3-Partition Problem, which asks whether there exists a partition of \(A\) into \(m\) triplets \(A_1, A_2, \dots, A_m\) such that the sum of each subset is \(T\).
		
		Consider an instance \(\mathcal{I}_P\) of the 3-Partition Problem with integers \(A = \{a_1, a_2, \dots, a_{3m}\}\) and target sum \(T\).
		We construct an instance \(\mathcal{I}_R\) of the SACRP as follows (see Figure~\ref{fig:complexity} for an illustration).
		We formulate the instance using the dense encoding scheme, as it is easier to follow; however, an analogous reduction can be constructed using the sparse encoding scheme.
		Let \(L\) be a large integer to be defined later, and let the slice consist of \(3\) stacks with the following heights:
		\[
		h(t) =
		\begin{cases} 
			m(2T + 1) + n + L & \text{if } t = 1, \\
			mT + n  & \text{if } t = 2, \\
			m(T+1) + n  & \text{if } t = 3. \\
		\end{cases}
		\]
		The picklist is partitioned into three sets \(X\), \(Y\), and \(Z\). 
		Set \(X\) contains \(mT\) targets and is partitioned into \(3m\) subsets \(X_1, \ldots, X_{3m}\). 
		Furthermore, each set \(X_i\) corresponds to an integer \(a_i\in A\) and is defined by \(X_i = \{x_{i1}, \ldots, x_{ia_i}\}\).
		In the first stack and at heights \(1\) to \(m(T + 1)\), each \(X_i\) occupies a consecutive substack, and every pair of consecutive \(X_i\) and \(X_{i+1}\) is separated by a single non-target.
		Set \(Y=\{y_1,\ldots,y_m\}\) contains \(m\) elements.  
	 	Element \(y_1\) lies \(T + 2\) heights above the topmost element of \(X\), and successive elements in \(Y\) are separated by \(T\) non-target ULs. 
		Set \(Z=\{z_1,\ldots,z_m\}\) contains \(m\) elements and forms a consecutive substack in stack \(3\), i.e., each \(z_i\) lies directly above \(z_{i-1}\) for \(i > 1\).
		Moreover, \(z_1\) lies two heights above the topmost element of \(X\) and \(T\) heights below the bottommost element of \(Y\).
		
		Let \(K \coloneqq  m(4T +  2 ) + 3n + L \), which is the total number of ULs in the slice, and we define \(L \coloneqq 4m(m(4T +  2 ) + 3n) \).
		Observe that we have \(n = mT + 2m\). 
		Thus, the reduction runs in pseudo-polynomial time. 
		We claim that \(\mathcal{I}_P\) is a {\sc Yes}-instance if and only if there exists a solution to \(\mathcal{I}_R\) with total energy expenditure at most \(L (4m + 1)\).
		In the remainder of the proof, we show this claim.
		
		First, we claim that a solution to \(\mathcal{I}_R\) has energy at most \(L (4m + 1)\) if and only if it retrieves all targets in \(4m\) cycles. 
		To see this, consider a solution \(\mathcal{X}\) that clears all targets in \(4m\) cycles.
		Observe that \(K\) is an upper bound on the energy expenditure per cycle.
		Thus, we know that 
		\[E(\mathcal{X})\leq 4m\cdot K = 4m(m(4T +  2 ) + 3n) + 4m\cdot L \le  L(4m+1),
		\]
		where we used the upper bound on the energy expenditure on a cycle in the first inequality, and the second inequality follows from the choice of \(L\).
		Conversely, consider a solution \(\mathcal{X}\) that has at most \(L(4m+1)\) energy expenditure.
		Observe that, as there are \(L\) ULs on top of the highest target in the first stack, every cycle requires at least \(L\) energy.
		Thus, if \(k\) is the number of cycles in \(\mathcal{X}\), we have \(kL\le E(\mathcal{X})\). Together with our assumption that
		\(E(\mathcal{X}) \le L(4m+1)\), we know that \(k \le 4m+1\), showing the claim. 
		
		We claim that the targets in sets \(X\) and \(Y\) can be retrieved in at least \(4m\) cycles.
		To see this, observe that in the first stack, each element in \(X\) and \(Y\) appears as \(4m\) pairwise disjoint vertical segments, each separated from the next by at least one non-target. 
		Thus, Lemma~\ref{lem:continuous} implies that we require at least \(4m\) cycles, hence, the claim follows.
		
		Similarly, we claim that \(Z\) can be retrieved in at least \(m\) cycles.
		Indeed, as there are no targets in the second stack, Lemma~\ref{lem:prefix} implies that only one element from \(Z\) can be retrieved in a cycle.
		
		Now, consider a solution of \(I_R\) in which all targets are retrieved within \(4m\) cycles.
		Based on our previous discussion, we know that there must be a cycle in which \(y_i\) and \(z_i\) are retrieved for all \(i \in \{1, \ldots, m\}\).
		Consider the initial layout. We aim to retrieve \(y_1\) and \(z_1\) in a single cycle. 
		Observe that we cannot lower \(z_1\), and therefore we have to lower \(y_1\) to height \(h(z_1)\) or \(h(z_1)-1\), which enables us to take both targets in one cycle.
		This arises precisely when we have to retrieve  sets  from \(X\) with \(T\) or \(T+1\) elements. 
		From the previous claims, we know that these sets must form a triplet \(X_i, X_j, X_k\) with \(T\) elements.
		Thus, the corresponding numbers \(a_i, a_j, a_k\) also form a triplet that sums to \(T\).
		After \(X_i, X_j, X_k\), \(y_1\) and \(z_1\) are retrieved.
		The element \(y_2\) is, again, \(T\) heights above \(z_2\), so the same arguments apply.
		Thus, we can use the triplets of \(X\) to iteratively build a solution to \(I_T\), thereby showing the only-if-claim.
		
		Now, consider a solution of \(I_T\).
		For any triple \(A_{i}\) with subset sum equal to \(T\), retrieve in three cycles the three sets of \(X\) corresponding to numbers in \(A_i\).
		Doing so lowers the first stack by exactly \(T\) levels, so that the lowest remaining target in \(Y\) aligns with the lowest remaining target in \(Z\); these two can therefore be retrieved in the same cycle.  
		Repeating this procedure for all \(m\) triples clears every set of \(X\) and retrieve all target from \(Y\) and \(Z\) in a total of \(4m\) cycles; three cycles per triple to clear the corresponding set of \(X\), followed by one cycle for the pair \((y_i, z_i)\). This show the if-claim.
		\begin{figure}
			\centering
			\begin{subfigure}[b]{.47\textwidth}
				\centering
				\includegraphics[scale = 1, page=1]{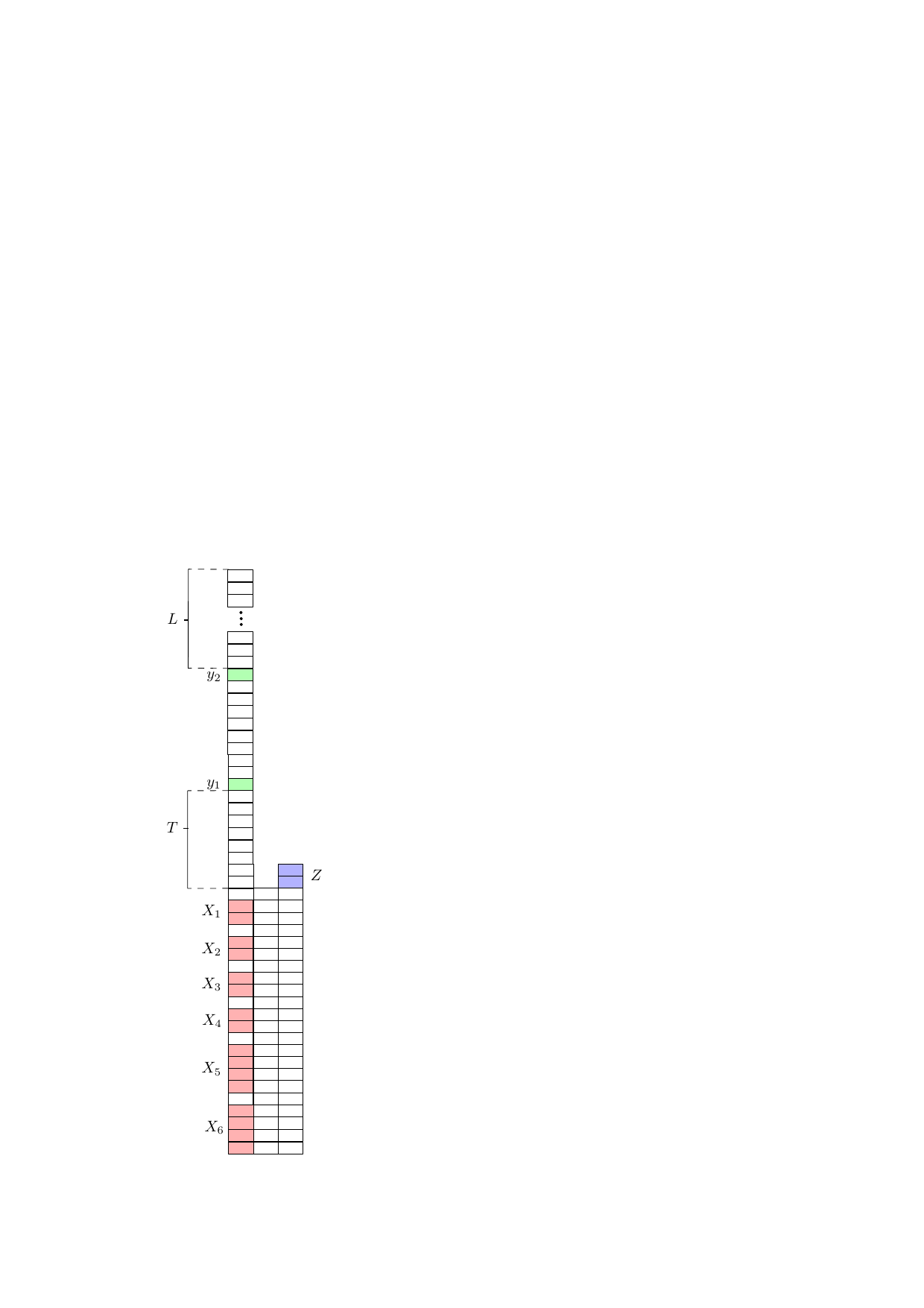}
				\caption{Initial construction of the instance. Observe that the vertical distance between \(y_1\) the first element in \(Z\) is exactly target sum \(T\).}
				\label{fig:sub1}
			\end{subfigure}%
			\hspace{15pt}
			\begin{subfigure}[b]{.47\textwidth}
				\centering
				\includegraphics[scale = 1, page=2]{complexity.pdf}
				\caption{Construction, after sets \(X_1\), \(X_3\) and \(X_6\) have been retrieved in three cycles. Observe that \(y_1\) is aligned with the first element of \(Z\), that is \(z_1\). Thus, \(y_1\) and \(z_1\) can now be retrieved in a single cycle.}
				\label{fig:sub2}
			\end{subfigure}
			\caption{Illustration of the construction for the following instance \(\mathcal{I}_P\): \(S=\{2,2,2,2,4,4\}\), thus \(m=2\), and \(T=8\). There is solution to \(\mathcal{I}_P\) with \(A_i = \{2,2,4\}\) and \(A_2 = \{2,2,4\}\). The red ULs corresponds to the set \(X\), the green UL corresponds to the set \(Y\) and the blue ULs corresponds to the set \(Z\).}
			\label{fig:complexity}
		\end{figure}
	\end{proof}
    
	\section{Mathematical Programming Model Formulation}\label{sec:mathProg}
	In this section, we propose a MIP formulation for the SACRP. 
    We first develop a local characterization of weakly triangular batches in Section~\ref{sec:batching} and then incorporate it into the model described in Section~\ref{sec:mip}.
        
    \subsection{Local Characterization of Weakly Triangular Batches}\label{sec:batching}

    The general idea of the mathematical model is the following. 
    In each cycle, a batch \(Y\) is initiated by activating a target \(b(Y)\).
    We say \(b(Y)\) is \emph{anchoring} \(Y\).
    We allow the model to iteratively add targets to \(Y\) while maintaining \(b(Y)\) as the rightmost target and preserving the weakly triangular structure of \(Y\).
    The local decision to add a target \(b\) to an existing batch \(Y\) is defined in the following lemmas.

    Specifically, fix a cycle \(c\) and consider the set of targets that are not retrieved and accessible at the beginning of \(c\), denoted with \(X_c\).
    Fix a weakly triangular batch \(Y\) and some \(b(Y)\), which is a rightmost target in \(Y\).
    Let \(s(Y) \coloneqq s(b(Y))\) and \(h(Y) \coloneqq h(b(Y))\).
    Consider some \(b \in X_c\) with \(s(b) \leq s(Y)\) and let \(h = h(b)\), \(s = s(b)\) and \(Y' \coloneqq Y \cup \{b\}\).
    Denote the energy expenditure in cycle \(c\) of a weakly triangular batch \(Z\) with \(A(Z)\).

    \begin{lemma}\label{lem:local1}
        If \(h = h(Y)\) and \(s \le s(Y)\), then \(Y'\) is weakly triangular. 
        Furthermore, we have \(A(Y') = A(Y) - 1\).
    \end{lemma}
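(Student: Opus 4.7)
My plan is to establish the two claims independently. For weak triangularity of \(Y'\), I would verify the three defining properties in turn. Height-continuity is immediate from \(h(b)=h(Y)\), since the sorted set of distinct heights used by \(Y'\) equals that of \(Y\). For stack-unimodality, the sequence \((s_i)\) is determined by the maximum stack index per height; because \(s \le s(Y)\) and \(b(Y) \in Y\) already sits at \((s(Y), h(Y))\), the maximum stack at height \(h(Y)\) remains \(s(Y)\), while all other \(s_i\) are unchanged, so the peak index \(\ell\) and the full unimodal pattern are preserved. For prefix-closure, the indices \((k,\ell)\) and the sequences \((s_i),(h_i)\) coincide for \(Y\) and \(Y'\), so the closure constraints are identical; since \(Y' \supseteq Y\), every UL that was in \(Y\) or absent remains in \(Y'\) or absent, and all required conditions are still met.

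For the energy identity, I would use the decomposition that the sparse encoding yields for any weakly triangular batch \(Z\), namely \(A(Z) = A(b(Z), i) - E_{prev}(b(Z), X_c) - E_{curr}(b(Z), i, Z)\), where \(i\) is the current level of the anchor at the beginning of cycle \(c\). Because adding \(b\) to \(Y\) keeps \(b(Y)\) as the rightmost target, we have \(b(Y') = b(Y)\) with the same level \(i\); thus the first two summands coincide for \(Y\) and \(Y'\), as does the set \(X_c\) of previously retrieved targets. The only change occurs in \(E_{curr}\). The new target \(b\) satisfies \(s(b) = s < s(Y)\)---strict, because \(b \neq b(Y)\) cannot occupy the same position \((s(Y), h(Y))\)---and trivially \(h(b) = h(Y) \ge h(Y) - 1 - i\). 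Hence \(b\) contributes exactly one additional unit to \(E_{curr}(b(Y), i, Y')\) versus \(E_{curr}(b(Y), i, Y)\), giving \(A(Y') = A(Y) - 1\).

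The step I expect to require the most care is the prefix-closure check, which implicitly uses that since \(Y\) is weakly triangular and \(b \notin Y\) is a UL at \((s, h(Y))\), the position must satisfy \(s > \max(s_{\ell-1}, s_{\ell+1})\)---otherwise \(b\) would already violate the closure condition of \(Y\) itself. Once this consistency observation is made, the verification for \(Y'\) reduces to noting that the unchanged constraints continue to hold under the superset \(Y' \supseteq Y\), which is automatic.
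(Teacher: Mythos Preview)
Your proof is correct. For the weak-triangularity part you follow essentially the paper's line---both of you observe that the distinct-height set and the per-height maxima \((s_i)\) are unchanged---though your prefix-closure argument (the constraints are determined by the unchanged \((s_i)\), and \(Y'\supseteq Y\)) is cleaner than the paper's somewhat cryptic sentence that ``there are no targets in \(Y\) on heights higher than \(h\).'' For the energy identity the paper gives a one-line operational argument (``one less UL that needs to be elevated''), whereas you route the claim through the decomposition \(A(Z)=A(b(Z),i)-E_{prev}-E_{curr}\) and isolate the single extra element contributing to \(E_{curr}\); this is a more formal path but encodes the same geometric observation that \(b\) lies in the lifted region determined by the anchor. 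Your final paragraph about \(s>\max(s_{\ell-1},s_{\ell+1})\) is a sound consistency remark, but, as you yourself note, it is not actually needed: since the \((s_i)\) are unchanged, the prefix-closure constraints for \(Y'\) are literally those of \(Y\), and the superset relation finishes the job.
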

    \begin{proof}
        Observe that the set of distinct heights and rightmost stack in each height remains unchanged; hence, \(Y'\) remains height-continuous and stack-unimodular.
        Observe that \(b\) is on height \(h\) and \(Y\) is prefix-closed, we know that there are no targets in \(Y\) on heights higher than \(h\). 
        Hence, \(Y'\) remains prefix-closed.
        Furthermore, observe that, retrieving \(b\) within the same cycle, implies that there is one less UL that needs to be elevated. Therefore, we have that \(A(Y') = A(Y) - 1\).
    \end{proof}

    \begin{lemma}\label{lem:local2}
       If every position \((h-1, s')\) for all \(s' \le s\) is occupied by a target \(b'\in X_c\) and \(b' \in Y\), then \(Y'\) is weakly triangular. 
        Furthermore, we have \(A(Y') = A(Y) - 1\).
    \end{lemma}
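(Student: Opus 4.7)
The plan is to mirror the proof of Lemma~\ref{lem:local1}: verify the three properties defining weak triangularity for \(Y' \coloneqq Y \cup \{b\}\) one by one, and then account for the unit decrease in energy.

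I would begin by pinning down \(h\) in the height sequence of \(Y\). The hypothesis places a target of \(Y\) at \((h-1, s)\), so \(h - 1\) equals some height \(h_j\) of \(Y\); by the height-continuity of \(Y\), either \(h\) is already an existing height \(h_{j+1}\) of \(Y\) (possible only when \(j < k\)) or \(h = h_k + 1\) extends the height range by one at the top. The new-top case is the main one, and I would treat it in detail.

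In the new-top case, stack-unimodularity follows because the added entry \(s_{k+1} = s\) satisfies \(s_{k+1} \le s_k\) (the hypothesis places a target of \(Y\) at \((h_k, s)\), giving \(s_k \ge s\)) and \(s \le s(Y) = s_\ell\), so the peak index \(\ell\) is preserved and the new entry merely extends the descending tail. Prefix-closure is delivered by the hypothesis itself: the only new condition introduced in the characterization of Lemma~\ref{lem:prefix} is the one at index \(i = k+1\) on the descending side, requiring every UL at height \(h_k = h-1\) in stacks \(1, \ldots, s_{k+1} = s\) to belong to \(Y'\) or be absent. This is precisely what the hypothesis supplies, and all other prefix-closure constraints coincide with those already satisfied by \(Y\).

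For the energy, column \(s\) is non-empty in \(Y\) and its topmost target in \(Y\) sits at \(h_k = h-1\), so in the cycle induced by \(Y\) the lift threshold of stack \(s\) is \(h\) and the UL at \((h, s)\) is lifted. Adding \(b\) raises the topmost target in column \(s\) to \(b\), shifting the threshold to \(h + 1\), so the UL formerly at \((h, s)\) is retrieved rather than lifted. Thresholds of all other columns are unaffected (the added \(b\) sits neither in an empty stack nor strictly to the right of any target of \(Y\), and does not change the topmost target of any non-empty stack other than \(s\)), yielding \(A(Y') = A(Y) - 1\).

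The main obstacle will be the alternative sub-case in which \(h\) is already a height of \(Y\) on its descending side (\(\ell < j+1 \le k\)): there one must verify that the updated value \(s'_{j+1} = \max(s_{j+1}, s)\) remains sandwiched between \(s_j \ge s\) and \(s_{j+2}\), and that the descending prefix-closure condition at index \(j+1\) still holds by combining \(Y\)'s own prefix-closure at row \(h_j\) with the row-\((h-1)\) hypothesis. The ascending sub-case (\(j + 1 \le \ell\) with \(h\) existing) is vacuous, since a non-\(Y\) target at \((h, s)\) with \(s \le s_j\) would immediately contradict the ascending prefix-closure of \(Y\) itself.
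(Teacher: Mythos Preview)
Your proposal is correct and follows the same approach as the paper---verify height-continuity, stack-unimodularity, and prefix-closure of \(Y'\), then argue that exactly one fewer UL must be lifted. The paper's own proof is considerably terser (essentially three one-line assertions plus an energy remark), whereas you carry out the natural case split on whether \(h\) is a new top height or an existing height of \(Y\); your observation that the ascending sub-case is vacuous by the prefix-closure of \(Y\) itself is correct and not made explicit in the paper.
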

    \begin{proof}
        Observe that by assumption, there exist \(b' \in Y\) with height \(h-1\), thus \(Y'\) is height-continuous.
        By construction \(s(b) < s(Y)\), thus \(Y'\) is stack-unimodular.
        By assumption, every position at height \(h-1\) and to the left of \(b\) is in \(Y\), thus \(Y'\) is prefix-closed.
        Furthermore, observe that, retrieving \(b\) within the same cycle, implies that there is one less UL that needs to be elevated. Therefore, we have that \(A(Y') = A(Y) -~1\).
    \end{proof}

    \begin{lemma}\label{lem:local3}
       If \(h = h(Y)-1\) or if position \((h+1, s)\) is occupied by a target \(b'\in X_c\) and \(b' \in Y\), and if \(b\) is in the first stack or the position \((h,s-1)\) is occupied by a target \(b'\in X_c\) and \(b' \in Y\), then \(Y'\) is weakly triangular. 
    Furthermore, we have \(A(Y') = A(Y)\).
    \end{lemma}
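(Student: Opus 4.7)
My plan is to verify that \(Y'\) satisfies each of the three defining properties of a weakly triangular batch—height-continuity, stack-unimodularity, and prefix-closure—and then derive the energy equality from a clearance-level analysis. Conditions~(A) and~(B) are designed to control, respectively, what happens at the height directly above \(b\) and at the stack directly left of \(b\), so together they pin down the interface between \(b\) and the existing batch \(Y\).

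Height-continuity follows immediately from~(A): in either branch \(h+1\) already lies in the height set \(H\) of \(Y\) (directly if \((h+1,s)\in Y\), or because \(h+1=h(Y)\in H\)), so \(h\) is either already in \(H\) or extends it downward by one, and \(H\cup\{h\}\) remains a consecutive interval. For stack-unimodularity I track the change in the rightmost-stack sequence: the only possible change is \(s'_i=\max(s_i,s)\) when \(h=h_i\in H\), and~(A) combined with the constraint \(s\le s(Y)=s_\ell\) yields \(s\le s_{i+1}\) (either directly via \((h+1,s)\in Y\) or because \(h_{i+1}=h(Y)\) forces \(s_{i+1}=s(Y)\ge s\)), so the sequence stays unimodal with peak unchanged; the case \(h\notin H\), forced to \(h=h_{\min}-1\), is handled analogously by prepending \(s'_1=s\).

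The main obstacle is prefix-closure, and here the interplay between~(A) and~(B) is essential. If \(s>1\), condition~(B) gives \((h,s-1)\in Y\), which forces \(s\le s_i+1\); hence \(s'_i\) extends the existing prefix at height \(h\) by at most one stack, and the only new obligation introduced by adding \(b\) is at stack \(s\) and the adjacent height \(h+1\). Condition~(A) is precisely tailored to supply this: either \((h+1,s)\in Y\) directly provides the required target in \(Y\subseteq Y'\), or \(h+1=h(Y)\) places the obligation at \((s,h(Y))\), a position inside the envelope of \(Y\) that is covered by the prefix-closure of \(Y\) applied at the appropriate adjacent index. The case \(s=1\) is trivial since the prefix in question has length one, and the symmetric obligation at height \(h-1\) above the peak is handled by the same argument.

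For the energy equality I show each clearance \(\ell_c(t)\) is preserved when \(Y\) is replaced by \(Y'\). Stacks \(t\ge s(Y)\) are unaffected since \(b\) cannot become the topmost target at \(s(Y)\): if \(s=s(Y)\), condition~(A) forces \(h<h(Y)\). At stack \(t=s<s(Y)\) the clearance depends only on targets at stacks strictly right of \(s\), which are unchanged by adding \(b\). For \(t<s\), \(\ell_c(t)\) equals the height of the topmost target of the batch in stacks \(>t\); by~(A), some target of \(Y\) at a stack \(\ge s>t\) already sits at height \(\ge h+1\), so introducing \(b\) at the lower height \(h\) cannot raise this topmost. Hence each \(\ell_c(t)\) is preserved and \(A(Y')=A(Y)\) follows from the formula \(A(Z)=\sum_t(h_c(t)-\ell_c(t))\).
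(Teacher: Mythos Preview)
Your proposal is correct and follows essentially the same approach as the paper: verify height-continuity, stack-unimodularity, and prefix-closure one by one, then argue the energy equality. Your treatment is considerably more detailed than the paper's very terse proof (which simply asserts each property in one line); in particular, your clearance-level analysis for \(A(Y')=A(Y)\) makes explicit what the paper leaves to the reader, and your handling of the unimodularity case split (\(h\in H\) versus \(h=h_1-1\)) is more careful than the paper's bare claim that ``\(s(b)<s(Y)\)'' suffices.
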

    \begin{proof}
        Observe that by assumption, there exist \(b' \in Y\) with height \(h+1\), thus \(Y'\) is height-continuous.
        By construction \(s(b) < s(Y)\), thus \(Y'\) is stack-unimodular.
        By assumption, every position at height \(h+1\) and to the left of \(b\) is in \(Y\), thus \(Y'\) is prefix-closed.
       Furthermore, observe that retrieving \(b\) within the same cycle, implies that the same ULs need to be elevated. Therefore, we have that \(A(Y') = A(Y)\).
    \end{proof}
    \begin{lemma}\label{lem:completeness}
		If \(Y\) is weakly triangular, then there exists an anchoring target \(b(Y)\) and an ordering \(y_1,\ldots,y_{|Y|}\) of \(Y\) such that, for every \(i\in\{1,\ldots,|Y|-1\}\), the element \(y_{i+1}\) satisfies one of Lemmas~\ref{lem:local1}--\ref{lem:local3} with respect to \(Y_i=\{y_1,\ldots,y_i\}\).
	\end{lemma}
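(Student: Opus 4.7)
The plan is to fix the anchor \(b(Y)\) as the topmost target of the rightmost stack of \(Y\): set \(s(Y) := \max_{b \in Y} s(b)\) and \(h(Y) := \max\{h(b) : b \in Y,\ s(b) = s(Y)\}\), and let \(b(Y)\) be the unique target at \((s(Y), h(Y))\). With this choice fixed, I would build the ordering in three phases, keyed to the anchor row \(h(Y)\).

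In the first phase I place \(b(Y)\) as \(y_1\) and then insert every other target of \(Y\) that lies in the anchor row \(h(Y)\); each such target satisfies \(h = h(Y)\) and \(s \le s(Y)\), so Lemma \ref{lem:local1} applies directly. The second phase inserts all targets at heights strictly below \(h(Y)\), processed top-down in decreasing order of height and within each row left-to-right. For a target \(b\) at position \((h, s)\) with \(h < h(Y)\) I would invoke Lemma \ref{lem:local3}: its first clause is met because either \(h = h(Y) - 1\) (when \(b\) sits directly below the anchor row) or the position \((h+1, s)\) has already been added to \(Y\) during the previous top-down pass; its second clause is met because either \(s = 1\) or the position \((h, s-1)\) was added earlier in the current left-to-right pass. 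The third phase inserts all targets at heights strictly above \(h(Y)\), processed bottom-up and left-to-right, using Lemma \ref{lem:local2}: when I reach a target \(b\) at \((h, s)\), the entire row \(h-1\) up to stack \(s\) has already been placed in earlier phases.

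The main obstacle is to verify that the neighbor positions demanded by the hypotheses of Lemmas \ref{lem:local2} and \ref{lem:local3} actually contain targets of \(Y\), rather than being empty or occupied by a non-target UL. I would establish this by combining the implicit batch-feasibility assumption \(h(t) \ge h(b)\) for every \(b \in Y\) and \(t < s(b)\), which forces a UL to exist at every position weakly below and weakly to the left of any target of \(Y\), with a careful application of Lemma \ref{lem:prefix}: any UL sitting in a prefix-constrained position must, by prefix-closedness, either belong to \(Y\) or be absent, hence it belongs to \(Y\) whenever the UL exists. The bookkeeping is most delicate at the boundary rows \(h_1\) and \(h_p\) and in the fringe around the peak of the unimodular sequence, where Lemma \ref{lem:prefix} only partially covers the relevant positions; I would handle this by choosing the peak index \(\ell\) of Lemma \ref{lem:unimodal} as large as possible so that the anchor sits at the top-right tip of the batch, which pushes any potentially unconstrained fringe positions into the anchor row itself and hence under Lemma \ref{lem:local1}'s purview.
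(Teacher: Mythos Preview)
Your approach is essentially the same as the paper's: both pick an anchor in the rightmost stack, then build the batch by processing the anchor row via Lemma~\ref{lem:local1}, the rows above via Lemma~\ref{lem:local2} (bottom-up, left-to-right), and the rows below via Lemma~\ref{lem:local3} (top-down, left-to-right). The only differences are cosmetic---you fix the anchor as the topmost element of the rightmost stack while the paper breaks ties arbitrarily, and you treat the lower rows before the upper ones while the paper does the reverse---and your write-up is more explicit than the paper's about why the neighbor positions demanded by Lemmas~\ref{lem:local2} and~\ref{lem:local3} are actually occupied by members of \(Y\).
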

	
	\begin{proof}
		Given a batch \(Y\) that is weakly triangular.
        Let \(b_1\) denote an target in \(Y\) in the rightmost stack, ties are broken arbitrarily.
        We construct a batch \(Y'\) by iteratively applying Lemmas~\ref{lem:local1}--\ref{lem:local3}.
		Initiate \(Y' \coloneqq\{b_1\}\) and \(b(Y') \coloneqq b_1\)
		For targets on height \(h(Y')\), we order them from right to left and use Lemma~\ref{lem:local1} to, iteratively, add them to \(Y'\).
		For targets with height more than \(h(Y')\), we order them non-decreasing in their heights and from left to right inside a height. 
		We then use Lemma~\ref{lem:local2} to, iteratively, add them to \(Y'\).
		For targets with height less than \(h(Y')\), we order them non-increasing in their heights and from left to right inside a height.
		We then use Lemma~\ref{lem:local3} to, iteratively, add them to \(Y'\).
		It is easy to see, that \(Y' = Y\), hence showing the claim.
		\end{proof}

	\subsection{Model Formulation}\label{sec:mip}
	
	We now define the decision variables of the model. 
	Each decision variable is defined for a cycle. 
    Observe that any feasible solution consists of at most \(n\) cycles. 
    If the optimal solution uses fewer than \(n\) cycles, then some of the cycles are empty and can be removed in a post-processing step without changing the energy expenditure.
    Thus, we define all decisions variables for cycle \(c\) for all \(1 \leq c \leq n\).
	Furthermore, to ensure that the model is compact, we represent the position of target \(b\) by the pair \((b,i)\), where \(i\) is the retrieval level.
	We define
	\begin{equation*}
		x(c,b,i) \coloneqq
		\begin{cases}
			1 & \parbox[t]{6.5cm}{if \(b\) is retrieved at retrieval level \(i\) in cycle~\(c\),}\\[8pt]
			0 & \text{else,}
		\end{cases}
		\quad \forall \, 1 \leq c \leq n,\; b\in \mathcal{B},\; 1 \leq i \leq R(b),
	\end{equation*}
	\begin{equation*}
		y(c,b,i) \coloneqq
		\begin{cases}
			1 & \parbox[t]{6.5cm}{if cycle \(c\) is anchored by \(b\) at retrieval level~\(i\),}\\[8pt]
			0 & \text{else,}
		\end{cases}
		\quad \forall \, 1 \leq c \leq n,\; b\in \mathcal{B},\; 1 \leq i \leq R(b),
	\end{equation*}
	
	\begin{equation*}
		z_1(c,b,i) \coloneqq
		\begin{cases}
			1 & \parbox[t]{6.5cm}{if \(b\) is retrieved at retrieval level \(i\) in cycle~\(c\) according to  Lemma~\ref{lem:local1},}\\[8pt]
			0 & \text{else,}
		\end{cases}
		\quad \forall \, 1 \leq c \leq n,\; b\in \mathcal{B},\; 1 \leq i \leq R(b),
	\end{equation*}
	
	\begin{equation*}
		z_2(c,b,i) \coloneqq
		\begin{cases}
			1 & \parbox[t]{6.5cm}{if \(b\) is retrieved at retrieval level \(i\) in cycle~\(c\) according to  Lemma~\ref{lem:local2},}\\[8pt]
			0 & \text{else,}
		\end{cases}
		\quad \forall \, 1 \leq c \leq n,\; b\in \mathcal{B},\; 1 \leq i \leq R(b),
	\end{equation*}
	
	\begin{equation*}
		z_3(c,b,i) \coloneqq
		\begin{cases}
			1 & \parbox[t]{6.5cm}{if \(b\) is retrieved at retrieval level \(i\) in cycle~\(c\) according to  Lemma~\ref{lem:local3} and \(b\) is one height lower then the anchoring variable,}\\[8pt]
			0 & \text{else,}
		\end{cases}
		\quad \forall \, 1 \leq c \leq n,\; b\in \mathcal{B},\; 1 \leq i \leq R(b),
	\end{equation*}
	\begin{equation*}
		z_4(c,b,i) \coloneqq
		\begin{cases}
			1 & \parbox[t]{6.5cm}{if \(b\) is retrieved at retrieval level \(i\) in cycle~\(c\) according to  Lemma~\ref{lem:local3} and \(b\) is more than one height lower then the anchoring variable,}\\[8pt]
			0 & \text{else,}
		\end{cases}
		\quad \forall \, 1 \leq c \leq n,\; b\in \mathcal{B},\; 1 \leq i \leq R(b),
	\end{equation*}

	\begin{equation*}
		u(c,b,i) \coloneqq
		\begin{cases}
			1 & \parbox[t]{6.5cm}{if \(b\) is at retrieval level \(i\) in the beginning of cycle \(c\)}\\[8pt]
			0 & \text{else,}
		\end{cases}
		\quad \forall \, 1 \leq c \leq n,\; b\in \mathcal{B},\; 1 \leq i \leq R(b).
	\end{equation*}

     Additionally, we define a decision variable \(E(c) \ge 0\) for each cycle \(c=1,\ldots,n\), representing the energy expenditure in cycle \(c\). The variable \(h(c,t) \ge 0\) denotes the height of stack \(t \in U\) at the beginning of cycle~\(c\).
    We use the notation \(A(b,i)\), \(U\), and \(w(b)\) as introduced in Section~\ref{sec:sparse}. Finally, define \(f(b)\) as a binary indicator of whether \(b\) lies in the first stack: \(f(b)=1\) if \(s(b)=1\), and \(f(b)=0\) otherwise.
	
	\begin{figure}[htbp]
		\centering
		\resizebox{\textwidth}{!}{%
			\fbox{%
				\parbox{\textwidth}{%
					\setlength{\jot}{10pt}
					\allowdisplaybreaks
					\scriptsize
					\begin{equation}
						\text{Minimize} \quad \sum_{c} E(c)\label{eq:obj}
					\end{equation}
					
					\text{Subject to:}
					
					\begin{alignat}{2}
						&\quad \sum_{c}\sum_{i} x(c,b,i) = 1 && \quad \forall b, \label{eq:const1}\\
						&\quad h(1,s) = h(s) && \quad \forall s, \label{eq:const2}\\
						&\quad h(c,s)-\sum_{\{(b,i): s(b)=s\}}x(c,b,i)=h(c+1,s)&&\quad \forall s,\; 1 \leq c < n, \label{eq:const3}\\
						&\quad \sum_{c'< c}\sum_{\substack{\{(b',i'):\\ h(b')-i'<h(b)-i,\\ s(b')=s(b)\}}}x(c',b',i')=\sum_{i} i\cdot u(c,b,i) &&\quad\forall c,\;b, \label{eq:const4}\\
						&\quad x(c,b,i)\leq u(c,b,i)&&\quad\forall c,\;b,\;i, \label{eq:const5}\\
						&\quad \sum_{i} u(c,b,i)=1&&\quad\forall c,\; b, \label{eq:const6}\\
						&\quad h(b)-i-1\leq w(b)+M(1-x(c,b,i))&&\quad\forall c,\;b,\;i, \label{eq:const7}\\
						&\quad h(b)-i-1\leq h(c,s)+M(1-x(c,b,i))&&\quad\forall c,\;b,\;i,\;1\leq s<s(b), \label{eq:const8}\\
						&\quad z_1(c,b,i)\leq\sum_{\substack{\{(b',i'):\\ h(b)-i=h(b')-i',\\ s(b)<s(b')\}}}y(c,b',i') &&\quad\forall c,\;b,\;i, \label{eq:const9}\\
						&\quad 
						s(b)\cdot (z_2(c,b,i)-1)\leq\sum_{\substack{\{(b',i'):\\ h(b)-i-1=h(b')-i',\\ s(b)\leq s(b')\}}}(x(c,b',i')-z_3(c,b',i')-z_4(c,b',i'))-s(b)&&\quad\forall c,\;b,\;i, \label{eq:const10}\\
						&\quad
						2z_3(c,b,i) - f(b)\leq 
						\sum_{\substack{\{(b',i'):\\ h(b) - i = h(b') - i' - 1,\\ s(b) \leq s(b')\}}} y(c,b',i') 
						+ \sum_{\substack{\{(b',i'):\\ h(b) - i = h(b') - i',\\ s(b) = s(b') + 1\}}} x(c,b',i')
						&&\quad\forall c,\;b,\; i \label{eq:const11}\\ 
						&\quad
						2z_4(c,b,i) - f(b) \leq 
						\sum_{\substack{\{(b',i'):\\ h(b) - i = h(b') - i' - 1,\\ s(b) = s(b')\}}} x(c,b',i') 
						+ \sum_{\substack{\{(b',i'):\\ h(b) - i = h(b') - i',\\ s(b) = s(b') + 1\}}} x(c,b',i') 
						&&\quad\forall c,\; b,\; i \label{eq:const12}\\ 
						&\quad x(c,b,i)\leq z_1(c,b,i)+z_2(c,b,i)+z_3(c,b,i)+z_4(c,b,i)&&\quad\forall c,\;b,\;i, \label{eq:const13}\\
						&\quad \sum_{i}y(c,b,i)\leq 1&&\quad\forall c,\;b, \label{eq:const14}\\
						&\quad x(c,b,i)\leq \sum_{\{(b',i'): s(b')\leq s(b) \}}y(c,b',i')&&\quad\forall c,\;b,\;i  \label{eq:const15}\\
						&\quad E(c)\geq A(b,i)\cdot y(c,b,i)-\sum_{c'\leq c}\sum_{\substack{\{(b',i'): \\ s(b)>s(b')\\ \text{or } (s(b)=s(b')\\ \text{and }h(b)<h(b'))\}}}x(c',b',i')+\sum_{b',\;i'} (z_3(c,b',i')+z_4(c,b',i'))&&\quad\forall c,\;b,\;i  \label{eq:const16}\\
						&\quad x(c,b,i),\;y(c,b,i),\;z_1(c,b,i),\;z_2(c,b,i),\;z_3(c,b,i),\;z_4(c,b,i),\;u(c,b,i)\in\{0,1\}&& \quad\forall c,\;b,\;i, \label{eq:const17}\\
						&\quad E(c),\; h(c,s) \geq 0 && \quad \forall c,\; s.\label{eq:const18}
					\end{alignat}
		}}}
		\captionsetup{labelformat=empty} 
		\caption{\textbf{Model 1}: Mathematical programming model for the SACRP.} 
		\label{model:model1}
	\end{figure}
	
	The mathematical model formulation is given in Model 1. 
	The objective function in Equation~\eqref{eq:obj} aims to minimize energy expenditure. 
	Constraints~\eqref{eq:const1} ensure that each target is retrieved exactly once. 
	Heights are initialized and updated in Constraints~\eqref{eq:const2} and~\eqref{eq:const3}.
	Constraints~\eqref{eq:const4} to~\eqref{eq:const6} model the changes in the retrieval level of targets.
	Constraints~\eqref{eq:const7} ensure that a target can only be retrieved if it is accessible with respect to stacks \(T \setminus U\), i.e., the stacks that do not contain any target.
    Constraints~\eqref{eq:const8} ensure that a target can only be retrieved if it is accessible with respect to stacks \(U\).
	This constraints require a sufficiently large constant \[M \coloneqq \max_b h(b) - \min_b \{w(b), h(s(b))\}.\]
    Constraints~\eqref{eq:const9} ensure that a target can be retrieved according to Lemma~\ref{lem:local1}.
    Constraints~\eqref{eq:const10} ensure that a target can be retrieved according to Lemma~\ref{lem:local2}.
	Constraints~\eqref{eq:const11} and~\eqref{eq:const12} capture Lemma~\ref{lem:local3}.
	In Constraints~\eqref{eq:const14}, we ensure that there is at most one anchoring target for each cycle.
	Finally, Constraints~\eqref{eq:const15} 
    ensure that the anchoring target is rightmost in the batch.
	In Constraints~\eqref{eq:const16}, we model the energy expenditure with Equation~\eqref{eq:energy1} to~\eqref{eq:energy3}.
	Finally, in Constraints~\eqref{eq:const17} to Constraints~\eqref{eq:const18}, we define the domains of the decision variables.

    Based on Lemma~\ref{lem:local1} to~\ref{lem:completeness}, the following result immediately follows.
    
	\begin{theorem}
        Every optimal solution to Model~1 corresponds to an optimal solution of the SACRP, and vice versa. The formulation uses \(O(n^3)\) variables and \(O(n^3)\) constraints.
	\end{theorem}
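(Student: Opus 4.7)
The plan is to establish a value-preserving correspondence between feasible Model 1 solutions and feasible SACRP solutions whose cycles consist entirely of weakly triangular batches (which we may assume without loss of generality by Lemmas~\ref{lem:continuous}--\ref{lem:batchGeom}), and then to count variables and constraints directly.

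For the direction SACRP $\to$ Model 1, I would start from an optimal SACRP solution and, for each cycle $c$, apply Lemma~\ref{lem:completeness} to the batch $Y_c$ to obtain an anchor $b(Y_c)$ together with an ordering $y_1,\ldots,y_{|Y_c|}$ in which every $y_{k+1}$ joins $\{y_1,\ldots,y_k\}$ via one of Lemmas~\ref{lem:local1}--\ref{lem:local3}. I would then set $y(c,b(Y_c),i(Y_c))=1$, set $x(c,b,i)=1$ at each retrieved target's current retrieval level, label each addition with the appropriate $z_k$, and derive $u$ and $h$ from the standard height bookkeeping. Verifying constraints~\eqref{eq:const1}--\eqref{eq:const8} is then a direct check using the SACRP accessibility conditions and the sparse-encoding definitions; constraints~\eqref{eq:const9}--\eqref{eq:const13} hold by construction of the $z$-labels, and \eqref{eq:const16} evaluates to the true cycle energy via Equations~\eqref{eq:energy1}--\eqref{eq:energy3}.

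For the direction Model 1 $\to$ SACRP, I define $Y_c \coloneqq \{b : \sum_i x(c,b,i)=1\}$ and argue that each $Y_c$ is weakly triangular. Constraints~\eqref{eq:const14}--\eqref{eq:const15} pin down a unique, rightmost anchor, while \eqref{eq:const13} forces every other retrieved target to be $z$-labeled, and \eqref{eq:const9}--\eqref{eq:const12} guarantee that the geometric premises of the corresponding lemma hold for that target relative to the rest of $Y_c$. I then induct on a topological order over the $z$-witnesses starting from the anchor, invoking Lemmas~\ref{lem:local1}--\ref{lem:local3} at each step to show that each growing prefix is weakly triangular, concluding that $Y_c$ itself is. Lemma~\ref{lem:batchGeom} then supplies a valid within-cycle retrieval sequence, constraints~\eqref{eq:const7}--\eqref{eq:const8} enforce between-cycle accessibility, and \eqref{eq:const2}--\eqref{eq:const6} maintain consistent height and retrieval-level bookkeeping. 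Finally, \eqref{eq:const16} together with \eqref{eq:energy1}--\eqref{eq:energy3} shows that the objective \eqref{eq:obj} equals the true SACRP energy, so the two optima agree.

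The size bound follows by direct counting: each variable family is indexed by $(c,b,i)$ with $c$ ranging over at most $n$ cycles and the total number of $(b,i)$ pairs bounded by $n + \sum_b R(b) = O(n^2)$, giving $O(n^3)$ variables; an analogous count applies to constraints, with care needed in \eqref{eq:const8} to aggregate the extra stack index against the $(b,i)$ index so the bound remains cubic. The hardest step, I expect, will be the induction in the Model 1 $\to$ SACRP direction: constraints~\eqref{eq:const9}--\eqref{eq:const12} are stated purely locally, so one must carefully choose a topological order over the $z$-witnesses guaranteeing that each witness target is already included in the growing prefix, ensuring the premises of Lemmas~\ref{lem:local1}--\ref{lem:local3} hold inductively rather than circularly.
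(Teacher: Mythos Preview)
Your proposal is correct and follows exactly the approach the paper relies on: the paper's own ``proof'' is just the single sentence that the result follows immediately from Lemmas~\ref{lem:local1}--\ref{lem:completeness}, so your plan is a faithful and considerably more detailed elaboration of precisely that route. Your flagged concern about constraint~\eqref{eq:const8} is apt---the paper does not address it either---but this does not constitute a gap in your argument relative to the paper.
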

	
	\section{Algorithms}\label{sec:algorithms}
    In this section, we present two approaches for the SACRP: a dynamic programming algorithm (Section~\ref{sec:DP}) and a greedy greedy algorithm (Section~\ref{sec:greedy}).
    
	\subsection{Dynamic Programming Approach}\label{sec:DP}
	
    Consider an arbitrary ordering of the targets \(\mathcal{B} = \{b_1, \ldots, b_n \}\).
	We develop a dynamic programming approach (DP) that uses the state space 
	\(X = (x_1, x_2, \ldots, x_n)\), where each \(x_i \in \{0,1\}\) indicates whether target \(b_i\) has been picked (\(x_i = 1\)) or not (\(x_i = 0\)).    
	We give an algorithm that constructs the transitions between states.

	For each state \(X_s \in \{0,1\}^n\), initiate a batch with \(Y \leftarrow \emptyset\) and let \(X \leftarrow X_s\).
    Let \(B(X,Y)\) denote the set of accessible targets \(b\) such that (i) \(b\) is not yet retrieved and not in \(Y\), (ii) \(b\) lies to the right of and not above every element of \(Y\), and (iii) \(Y' \coloneqq Y \cup \{b\}\) is weakly triangular.
    If \(b(X,Y)\) is non-empty, then for each \(b_i\in B(X,Y)\) define the successor state \(X_t\) from \(X\) by setting \(x_i = 1\).
    We construct a transition between \(X_s\) and \(X_t\) with cost \(A(Y')\).
    We let \(X \leftarrow X_t\) and \(Y \leftarrow Y'\) and repeat.
    If \(b(X,Y)\) is empty, we stop the recursion.
	Given the complete state space and transitions, we compute a shortest path from state \((0, \ldots, 0)\) to \((1, \ldots, 1)\) and return its cost. We refer to this as Algorithm 1.
	
	\begin{theorem}
		Algorithm 1 computes an optimal solution to the SACRP in \(3^{O(n)}\) time.
	\end{theorem}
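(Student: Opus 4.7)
The plan is to prove the theorem in two parts: correctness, which leverages the structural lemmas of Section~\ref{sec:strucProperties}, and a runtime bound resting on a simple three-way classification of the targets.

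For correctness, I would establish a cost-preserving correspondence between feasible solutions to the SACRP and paths from $(0,\ldots,0)$ to $(1,\ldots,1)$ in the DAG built by Algorithm~1. Given any feasible solution $\mathcal{X}$, Lemmas~\ref{lem:continuous}--\ref{lem:batchGeom} allow me to assume without loss of generality that every cycle $c$ retrieves a weakly triangular batch $Y_c$. Lemma~\ref{lem:completeness} then supplies an anchoring target $b(Y_c)$ and an ordering of $Y_c$ in which every successive insertion is governed by one of Lemmas~\ref{lem:local1}--\ref{lem:local3}. This ordering traces exactly a chain of transitions of Algorithm~1 from $X_s$ (the state encoding the targets retrieved before cycle $c$) to $X_s \cup Y_c$, and the cost $A(Y_c)$ of the final transition matches the cycle energy $E(c)$ because successive values of $A$ decrease by $1$, $1$, or $0$ in agreement with Lemmas~\ref{lem:local1}--\ref{lem:local3}. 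Concatenating such chains over all cycles produces a path from $(0,\ldots,0)$ to $(1,\ldots,1)$ of total cost $E(\mathcal{X})$. The converse is direct: any path decomposes into transitions, each corresponding to a weakly triangular batch that by Lemma~\ref{lem:batchGeom} can be realized as a feasible cycle with the same cost. Hence the shortest-path value equals the optimal energy.

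For the runtime bound, the key observation is that each transition is uniquely labelled by a pair $(X_s, Y)$ with $Y \subseteq \mathcal{B} \setminus \{b_i : x_i^s = 1\}$. Classifying each target as either \emph{already retrieved in $X_s$}, \emph{included in the current batch $Y$}, or \emph{still pending} assigns exactly one of three mutually exclusive states to each target, giving at most $3^n$ distinct pairs and hence at most $3^n$ edges in the DAG, whose vertex set has size at most $2^n$. Each extension step can be processed in time polynomial in $n$: accessibility of $b$ and the weakly triangular property of $Y' = Y \cup \{b\}$ can be checked using the sparse-encoding quantities of Section~\ref{sec:sparse}, while $A(Y')$ is updated incrementally via Lemmas~\ref{lem:local1}--\ref{lem:local3}. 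A standard topological-order shortest-path computation then runs in time linear in the DAG size, yielding an overall bound of $3^{O(n)}$.

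The main obstacle will be arguing completeness of the transition set: the algorithm only discovers batches that can be assembled via the canonical scheme of Lemma~\ref{lem:completeness}, so I must verify that this scheme reaches every weakly triangular batch and therefore, by the earlier reductions, every batch appearing in some optimal solution. With Lemma~\ref{lem:completeness} already in hand this reduces to routine bookkeeping, and the conceptual weight of the proof thus sits in the structural results of Section~\ref{sec:strucProperties} rather than in the algorithmic argument itself.
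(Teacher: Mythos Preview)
Your proposal is correct and follows essentially the same approach as the paper: correctness via the structural Lemmas~\ref{lem:continuous}--\ref{lem:batchGeom} and \ref{lem:completeness}, and the $3^n$ bound via the three-way classification of targets (which is exactly the binomial identity $\sum_k \binom{n}{k}2^{n-k}=3^n$ the paper invokes). One small wording caveat: Algorithm~1 creates a \emph{direct} edge from $X_s$ to $X_s\cup Y'$ for every reachable batch $Y'$, so each cycle corresponds to a single transition of cost $A(Y_c)$ rather than a ``chain'' whose intermediate costs would otherwise accumulate; with that adjustment your cost-preserving correspondence goes through as stated.
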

	\begin{proof}
		We discuss the time complexity and the correctness of the algorithm separately.
		\medskip 
		
		\noindent \emph{Time Complexity.} Given the structure of the state space, states can be grouped to \(n+1\) stages such that in each stage \(k \in \{0, \ldots, n\}\) it holds for any state \(X\) that \(\sum_{i=1}^n x_{i} = k \).
		Observe that stage \(k\) has \(\binom{n}{k}\) states and there are at most \(2^{n-k} \) transitions from every such state to states in stage \(\ell\) with \(\ell > k\). Therefore, the size of the state space is bounded by \(\sum_k \binom{n}{k} 2^{n-k} = 3^n\). 
		The construction of each arc is done in \(O(n)\) time.
		A straight forward implementation thus finds the shortest path in \(3^{O(n)}\) time. 
		
		\medskip
		\noindent \emph{Correctness.} Observe that, by construction, the algorithm enumerates over all possible weakly triangular batches.
		By Lemmas~\ref{lem:continuous} to~\ref{lem:prefix}, we know that there exist an optimal solution, 
        such that every batch is weakly triangular.
        By Lemma~\ref{lem:batchGeom}, we know that every solution, such that every cycle consists of a weakly triangular batch is feasible.
        It immediately follows that Algorithm 1 computes an optimal solution.
	\end{proof}

    Implicitly enumerating the whole state space will incur an exponential computational cost in the number of targets. Thus, it may be helpful to discard states that will provably not lie on the shortest path between \((0, \ldots, 0)\) and \((1, \ldots, 1)\). We will make use of three dominance rules to reduce the state space, which will be explained in the following.
    
    \medskip \noindent \textbf{Dominance Rule 1:}
    Consider a state \(X_s\) with an unassigned, accessible target \(i\) for which it holds:
    \begin{enumerate}
        \item \(i\) is the only non-retrieved target in its stack;
        \item \(i\) is the highest non-retrieved target;
        \item \(i\) lies in the leftmost stack that still contains an non-retrieved target.
    \end{enumerate}
    Then retrieving \(i\) next is dominant: the only transition to consider from \(X_s\) is to the state \(X_t\) with \(x_i=1\).

    To see this consider that \(i\) can never be retrieved together with other targets in the same cycle, retrieving \(i\) cannot change accessibility for any other targets and retrieving it now will only reduce energy expenditure for any other target.
    
    \medskip 
    \noindent
    \textbf{Dominance Rule 2:}
    Consider a state \(X_s\) with two unassigned accessible targets \(i\) and \(i'\) for which it holds:
    \begin{enumerate}
        \item \(i\) is in the same stack as \(i'\);
        \item \(i\) is located directly on top of \(i'\).
    \end{enumerate}
    Then retrieving \(i'\) next is dominated: the transition from \(X_s\) to the state \(X_{t}\) with \(x_{i'}=1\) can be pruned.
    
    To see this consider that \(i'\) is retrieved at a higher cost than \(i\) (an additional box needs to be lifted) and that after the retrieval of \(i'\), \(i\) falls exactly into the place that was taken by \(i'\). Since targets are otherwise identical we could swap the label of target \(i\) and \(i'\) and reach the same state at lower cost by retrieving \(i\) first.

    \medskip 
    \noindent
	\textbf{Dominance Rule 3:}
    Consider a state \(X_s\) and two transitions to states \(X_t\) and \(X_t'\) with associated batches \(Y_t\) and \(Y_t’\), which differ only in one element (\(Y_t = Y_t' \bigcup \{i\}\)) for which it holds:
    \begin{enumerate}
        \item \(i\) is the only non-retrieved target in its stack;
        \item \(i\) is retrieved at the same height as another target \(i'\) in batch \(Y_t\);
        \item the stack of \(i\) is located to the right of the stack of \(i’\);
        \item there are no other non-retrieved targets in the stacks between those of \(i\) and \(i'\);
        \item retrieving \(i\) preserves accessibility of all remaining targets.
    \end{enumerate}
    Then the path from \(X_s\) to \(X_t'\) to \((1, \ldots, 1)\) cannot be shorter than the path between \(X_s\) to \(X_t\) to \((1, \ldots, 1)\).

    To see this consider that since all ULs until the stack of \(i'\) are already lifted to retrieve \(i'\), taking \(i\) in the same cycle only incurs additional energy of all the ULs in between the stack of \(i'\) and \(i\). Since there are no non-retrieved targets in these stacks it cannot incur less energy to retrieve \(i\) later instead of now. Since retrieving \(i\) further does not impact the accessibility of other boxes, it is always better to retrieve it now instead of later. Dominance Rule 3 would thus allow us to discard the transition from \(X_s\) to \(X_t'\). Moreover, if this transition from \(X_s\) to \(X_t'\) is on the shortest path from \((0, \ldots, 0)\) to \(X_t'\), then the whole state of \(X_t'\) can be discarded.

    We investigate the ability of the dominance rules to eliminated states from the enumeration in the experimental study.
	
    \subsection{Greedy Algorithm}\label{sec:greedy}
   Although the dynamic program and a state-of-the-art solver for Model 1 can compute optimal solutions, their runtimes may become prohibitive on large instances.
   We now propose a greedy algorithm that provides a good solution within negligible time.
   Furthermore, the algorithm is based on simple, iterative rules that makes it accessible for practitioners.

    Given the set \(X\) of already retrieved targets and the current batch \(Y\), we call a target \(b\) \emph{legal} if (i) \(b\) is accessible, (ii) \(b\) can be added to \(Y\) via one of Lemmas~\ref{lem:local1}--\ref{lem:local3}, and (iii) retrieving \(b\) preserves feasibility of the remaining instance by Lemma~\ref{lem:feasible}.
	
	The main idea of the algorithm is to sequentially retrieve the UL that is most critical in terms of accessibility, while retrieving other targets if they are legal.
	Specifically, let \(b\) denote the target at maximum height; in case of a tie, choose the one in the stack that is rightmost. 
    Initiate the algorithm with \(X \leftarrow \emptyset\) and \(c \leftarrow 0\).
	Start a new cycle by setting \(c \leftarrow c+1\). In stack \(s(b)\), identify the accessible targets and let \(b'\) be the topmost among them. Initialize the batch \(Y \leftarrow \{b'\}\) and iteratively add to \(Y\) any target that is legal. When no more legal targets exist, retrieve all targets in \(Y\) and let \(X \leftarrow X \cup Y\). If all targets are retrieved, terminate the algorithm, otherwise proceed to the next cycle.
	We refer to the procedure as Algorithm~2
	
	\begin{theorem}
	    Algorithm 2 computes a feasible solution in \(O(n^3)\) time.
	\end{theorem}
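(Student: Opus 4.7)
The statement has two parts---feasibility of the output schedule and an $O(n^3)$ runtime bound---which I would prove separately.

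For \emph{correctness}, I would induct on the cycle index $c$ with the invariant that at the start of cycle $c$ the residual instance (current heights and unretrieved targets) satisfies the feasibility condition of Lemma~\ref{lem:feasible}. The base case is the paper's standing assumption that the input is feasible. For the inductive step, two sub-claims require work. First, the topmost accessible target $b'$ in stack $s(b)$ exists: the lowest unretrieved target $b^\ast$ in stack $s(b)$ has zero unretrieved targets below it in its own stack, so applying Lemma~\ref{lem:feasible} to the residual instance gives $h_c(t)\ge h_c(b^\ast)$ for all $t<s(b)$, whence $b^\ast$ is accessible after the obvious top-lifts. Second, the singleton batch $\{b'\}$ is legal, i.e., retrieving it preserves the condition of Lemma~\ref{lem:feasible}. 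Here the tie-breaking rule is essential: because $b$ is the maximum-height target in the rightmost such stack, every unretrieved target $b''$ with $s(b'')>s(b)$ satisfies $h_c(b'')\le h_c(b)-1\le h_c(s(b))-1$, so even after $h_c(s(b))$ drops by one the inequality $h_c(s(b))-1\ge h_c(b'')-R(b'')$ still holds; no other stack height or residual count is affected by the retrieval. Each subsequent candidate is then tested for legality explicitly, and Lemmas~\ref{lem:local1}--\ref{lem:local3} ensure that the batch stays weakly triangular throughout. By Lemma~\ref{lem:batchGeom}, the resulting batch is therefore retrievable in a single cycle, and because each cycle retrieves at least $b'$, the algorithm halts after at most $n$ cycles with a feasible schedule.

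For the \emph{runtime}, there are at most $n$ cycles, and I would show that each cycle takes $O(n^2)$ work. The key observation is that with an $O(n)$ precomputation at the start of each cycle, and with incremental $O(n)$ maintenance after each add, every legality test can be answered in $O(1)$: accessibility of $b$ reduces to comparing $h_c(b)$ against a stored $\min_{t<s(b)} h_c(t)$; applicability of Lemmas~\ref{lem:local1}--\ref{lem:local3} is a purely local check against the neighbours of $b$ with respect to the current batch; and feasibility preservation after adding $b$ only touches the single inequality $h_c(s(b))-1\ge h_c(b'')-R(b'')$ for $b''$ with $s(b'')>s(b)$, which can be tested against a maintained maximum of $h_c(b'')-R(b'')$ over such targets. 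Since the batch grows by at most $n$ additions and each addition scans $O(n)$ candidates, the per-cycle cost is $O(n^2)$, giving the global $O(n^3)$ bound.

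The main obstacle is the second correctness sub-claim: without the tie-breaking rule it is easy to construct small examples in which greedily retrieving the leftmost max-height target drops a critical stack by one and breaks the inequality of Lemma~\ref{lem:feasible} for a target just to the right. Ruling this out cleanly, and then verifying that the per-addition feasibility check really does preserve the Lemma~\ref{lem:feasible} invariant as the batch grows, is the delicate bookkeeping of the proof; the remainder is a routine application of the structural lemmas from Sections~\ref{sec:feasibility} and~\ref{sec:geometry}.
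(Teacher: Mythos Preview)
Your proposal is correct and follows the same two-part split as the paper---weak triangularity via Lemmas~\ref{lem:local1}--\ref{lem:local3} together with Lemma~\ref{lem:batchGeom} for retrievability of each batch, and Lemma~\ref{lem:feasible} for the residual-instance invariant. Two differences are worth noting. First, your correctness argument is actually more careful than the paper's: you explicitly justify that the initial anchor $b'$ is legal (using the max-height/rightmost tie-break to bound $h_c(b'')$ for $s(b'')>s(b)$), whereas the paper's proof simply asserts that the feasibility check is applied ``at each retrieval'' without addressing that the algorithm initializes $Y\leftarrow\{b'\}$ before any legality test. Second, on runtime the paper in fact proves the sharper bound $O(n^2\log n)$ by charging each ``identify next legal target'' step (at cost $O(n\log n)$) globally---at most $n$ times overall since each target is added once---rather than bounding each cycle by $O(n^2)$ as you do; both arguments of course establish the stated $O(n^3)$.
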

		
	\begin{proof}
	We discuss the time complexity and the correctness of the algorithm separately.
	\medskip 
	
	\noindent \emph{Time Complexity.} 
    Observe that the algorithm initiates at most \(n\) cycles. Selecting the anchor \(b\) in a cycle takes \(O(n)\) time. Identifying the next legal target \(b'\) requires \(O(n\log n)\) time and is performed at most \(n\) times overall, since each target is added at most once. Consequently, the total running time is \(O(n^2 \log n)\).
	
	\medskip
	\noindent \emph{Correctness.} At each retrieval, we apply Lemmas~\ref{lem:local1}–\ref{lem:local3} to maintain the batch’s weakly triangular structure and hence its retrievability by Lemma~\ref{lem:batchGeom}. 
    In addition, by verifying the feasibility condition in Lemma~\ref{lem:feasible}, we ensure that removing the chosen target preserves feasibility for the remaining instance.
    Hence, the algorithm maintains feasibility throughout and returns a feasible solution.
	\end{proof}	
    
	\section{Computational Study}\label{sec:compStudy}
	
	In this section, we computationally assess the tradeoffs between the runtime and solution quality of the MIP, DP, and greedy algorithm on instances ranging from small to large.
	We describe the testbed in Section~\ref{sec:testbed} and present results in Section~\ref{sec:comp_res}.

    All algorithms were implemented in C++ (Visual Studio 2017 version 15.9.8). The GUROBI API was used with GUROBI version 12.0.0. All tests were run on an AMD Ryzen Threadripper 3990X with 2.9 GHz (single-threaded) and 256 GB RAM. We imposed a time limit of 600 seconds for each algorithm to reflect practical computational constraints.
	
	\subsection{Instance Generation}\label{sec:testbed}
	As the side-access compact warehouses is a novel warehouse solution, our tests are based on synthetic data.
	Each instance is parameterized by a tuple \( (d, w, h) \), where \( d \) denotes the number of targets, \( w \) the maximum number of stacks, and \( h \) the maximum number of heights.
	The layout as well as the placement of the targets are generated randomly.
	  After generating an instance, we verify its feasibility and discard it if it is infeasible.
	
	We create two sets of instances, whose dimensions are selected to cover a range that reflects current industry standards.
	  The \emph{small instances} are defined by \(d\in\{5,10,15\}\), \(w\in\{8,12,16\}\), and \(h\in\{8,12,16\}\); the \emph{large instances} are defined by \(d\in\{18,21,24\}\), \(w\in\{20,24,28\}\), and \(h\in\{20,24,28\}\).
    We generate \(30\) feasible instances for each of the \(27\) configurations, yielding \(810\) feasible instances in total for each set.
	
	Tables~\ref{tab:small} and~\ref{tab:large} report the results for the small and large instances, respectively.
    The first four columns specify the instance parameters (number of targets, maximum width, maximum height) and the number of generated instances (“Count”).
    Columns 5–8 summarize the MIP results: the share of instances for which the best solution found by the MIP solver matches the DP optimum, the share of the proven optimals by the solver (zero optimality gap at termination), the average optimality gap at termination, and the average runtime in seconds.
    Columns 9–12 summarize the results of the DP: 
    the share of instances solved optimally within the time limit, the average total number of generated and explored states, and the average runtime.
    Finally, columns 13–15 show the greedy algorithm's performance: the share of instances where the objective matches the DP optimum, the average gap to the DP optimum, and the average runtime.
	
	\subsection{Computational Results}\label{sec:comp_res}
	
	We start by analyzing the performance of the algorithms on the small instances.
	The MIP solver was able to both find and prove optimal solutions in all instances with \( d \in \{5,10\} \). For \( d = 15 \), the solver still found optimal solutions (i.e. the found solution matched the solution found by the DP) but failed to prove them in several cases, resulting in an average proof rate of $91.1\%$ across all small instances. Despite the drop in proof rate, MIP consistently found the optimal solutions, with an average optimality gap of only $0.6\%$. Its runtimes scaled with instance size, averaging $95.7$ seconds and reaching several hundred seconds for instances with \( d = 15 \).
	
	The DP algorithm found optimal solutions for all small instances with negligible runtimes averaging only \( 0.015 \) seconds. 
    Moreover, the number of explored states remained small relative to the total state space, even for larger layout configurations. The fact that the total number of states grew exponentially while the number of explored states grew sublinearly shows the effectiveness of dominance rules and how they effectively prune the search space.
	
	In contrast, the greedy algorithm identified optimal solutions in only one fifth of small instances, with an average optimality gap of \( 20.8\% \). While it solved each instance in under \( 0.01 \) seconds, its solution quality was highly sensitive to instance structure.
	
	A comparison of the three methods shows clear performance differences in small instances. 
    The DP algorithm consistently achieved optimal solutions with negligible runtimes and relatively low state exploration. 
    The MIP solver, while significantly slower, found all optimal solutions but failed to prove optimality in \( 8.9\% \) of the instances within the 600-second time limit. 
    This suggests that the branch-and-bound search identifies high-quality solutions quickly, but closing the remaining optimality gap entails substantial certification cost.
    On the other hand, the greedy algorithm is able to compute a solution instantly with adequate performance.

	\begin{sidewaystable}
  \centering
  \scriptsize
  \resizebox{\linewidth}{!}{%
    \begin{tabular}{cccc|rrrr|rrrr|rrr}
      \toprule
      \multicolumn{4}{c}{} 
        & \multicolumn{4}{c}{MIP} 
        & \multicolumn{4}{c}{DP} 
        & \multicolumn{3}{c}{Heuristic} \\
      \cmidrule(lr){5-8}\cmidrule(lr){9-12}\cmidrule(lr){13-15}
        \#targets & Max.\ width & Max.\ height & Count
        & \shortstack{\#found\\opt. [\%]\textsuperscript{*}} & \shortstack{\#proven\\opt. [\%]} & \shortstack{Avg.\\gap [\%]} & \shortstack{Avg.\\runtime [s]}
        & \shortstack{\#found\\opt. [\%]} 
        & \shortstack{Avg.\# total\\states} & \shortstack{Avg.\# explored\\states} & \shortstack{Avg.\\runtime [s]}
        & \shortstack{\#found\\opt. [\%]} & \shortstack{Avg.\\gap [\%]} & \shortstack{Avg.\\runtime [s]} \\
      \midrule
  5 & 8 & 8 & 30 & 100.0 & 100.0 & 0.0 & 0.064 & 100.0 & 32 & 24 & 0.000 & 40.0 & 5.4 & 0.001 \\
  5 & 8 & 12 & 30 & 100.0 & 100.0 & 0.0 & 0.049 & 100.0 & 32 & 23 & 0.000 & 36.7 & 4.7 & 0.001 \\
  5 & 8 & 16 & 30 & 100.0 & 100.0 & 0.0 & 0.046 & 100.0 & 32 & 22 & 0.000 & 40.0 & 5.3 & 0.001 \\
  5 & 12 & 8 & 30 & 100.0 & 100.0 & 0.0 & 0.080 & 100.0 & 32 & 25 & 0.000 & 43.3 & 5.1 & 0.001 \\
  5 & 12 & 12 & 30 & 100.0 & 100.0 & 0.0 & 0.078 & 100.0 & 32 & 26 & 0.000 & 36.7 & 3.4 & 0.002 \\
  5 & 12 & 16 & 30 & 100.0 & 100.0 & 0.0 & 0.059 & 100.0 & 32 & 26 & 0.000 & 33.3 & 2.4 & 0.002 \\
  5 & 16 & 8 & 30 & 100.0 & 100.0 & 0.0 & 0.090 & 100.0 & 32 & 26 & 0.000 & 23.3 & 8.0 & 0.001 \\
  5 & 16 & 12 & 30 & 100.0 & 100.0 & 0.0 & 0.047 & 100.0 & 32 & 25 & 0.000 & 33.3 & 2.5 & 0.002 \\
  5 & 16 & 16 & 30 & 100.0 & 100.0 & 0.0 & 0.039 & 100.0 & 32 & 28 & 0.000 & 36.7 & 1.4 & 0.002 \\
  \addlinespace
  10 & 8 & 8 & 30 & 100.0 & 100.0 & 0.0 & 4.813 & 100.0 & 1024 & 665 & 0.000 & 0.0 & 37.1 & 0.002 \\
  10 & 8 & 12 & 30 & 100.0 & 100.0 & 0.0 & 2.385 & 100.0 & 1024 & 642 & 0.000 & 0.0 & 27.2 & 0.002 \\
  10 & 8 & 16 & 30 & 100.0 & 100.0 & 0.0 & 2.267 & 100.0 & 1024 & 573 & 0.000 & 6.7 & 20.9 & 0.002 \\
  10 & 12 & 8 & 30 & 100.0 & 100.0 & 0.0 & 5.448 & 100.0 & 1024 & 745 & 0.000 & 3.3 & 21.1 & 0.002 \\
  10 & 12 & 12 & 30 & 100.0 & 100.0 & 0.0 & 4.009 & 100.0 & 1024 & 696 & 0.000 & 0.0 & 13.5 & 0.002 \\
  10 & 12 & 16 & 30 & 100.0 & 100.0 & 0.0 & 2.523 & 100.0 & 1024 & 720 & 0.000 & 3.3 & 19.6 & 0.003 \\
  10 & 16 & 8 & 30 & 100.0 & 100.0 & 0.0 & 5.299 & 100.0 & 1024 & 784 & 0.000 & 0.0 & 19.4 & 0.002 \\
  10 & 16 & 12 & 30 & 100.0 & 100.0 & 0.0 & 4.233 & 100.0 & 1024 & 793 & 0.000 & 0.0 & 8.8 & 0.002 \\
  10 & 16 & 16 & 30 & 100.0 & 100.0 & 0.0 & 2.192 & 100.0 & 1024 & 712 & 0.004 & 3.3 & 7.5 & 0.003 \\
  \addlinespace
  15 & 8 & 8 & 30 & 100.0 & 70.0 & 5.3 & 346.974 & 100.0 & 32768 & 19939 & 0.003 & 0.0 & 71.8 & 0.003 \\
  15 & 8 & 12 & 30 & 100.0 & 86.7 & 0.7 & 193.630 & 100.0 & 32768 & 18369 & 0.004 & 0.0 & 52.1 & 0.003 \\
  15 & 8 & 16 & 30 & 100.0 & 96.7 & 0.1 & 121.213 & 100.0 & 32768 & 16513 & 0.004 & 0.0 & 35.9 & 0.003 \\
  15 & 12 & 8 & 30 & 100.0 & 50.0 & 4.0 & 447.693 & 100.0 & 32768 & 20636 & 0.003 & 0.0 & 38.1 & 0.003 \\
  15 & 12 & 12 & 30 & 100.0 & 86.7 & 0.4 & 184.939 & 100.0 & 32768 & 19190 & 0.003 & 0.0 & 40.7 & 0.003 \\
  15 & 12 & 16 & 30 & 100.0 & 83.3 & 0.3 & 183.064 & 100.0 & 32768 & 18743 & 0.003 & 0.0 & 32.1 & 0.003 \\
  15 & 16 & 8 & 30 & 100.0 & 50.0 & 2.9 & 454.148 & 100.0 & 32768 & 23411 & 0.003 & 0.0 & 31.2 & 0.003 \\
  15 & 16 & 12 & 30 & 100.0 & 70.0 & 1.1 & 321.706 & 100.0 & 32768 & 21425 & 0.003 & 0.0 & 29.1 & 0.003 \\
  15 & 16 & 16 & 30 & 100.0 & 66.7 & 0.7 & 297.937 & 100.0 & 32768 & 21728 & 0.364 & 3.3 & 18.3 & 0.004 \\
  \addlinespace
  All & -- & -- & 810 & 100.0 & 91.1 & 0.6 & 95.742 & 100.0 & 11275 & 6908 & 0.015 & 12.7 & 20.8 & 0.002 \\
      \bottomrule
    \end{tabular}%
  }
  \caption{Results for small instances.\\
  \scriptsize
  \textsuperscript{*}When DP times out and MIP fails to prove optimality, we cannot verify if MIP's solution is optimal. Such cases are counted as 'not found optimal'.}
  
  \label{tab:small}
\end{sidewaystable}

    The MIP solver matched the DP optima in nearly all of the large instances where DP successfully terminated, i.e. instances with \( d \in \{18,21\} \). Note that without DP solutions for instances with \( d \in \{24\} \), we could not verify MIP's optimality. Moreover,  MIP's ability to prove optimality declined sharply as problem size increased, with proof rates approaching zero for these largest instances.
    The average optimality gap increased from \( 0.6\% \) for small instances to \( 4.3\% \) for large instances. 
    The average runtime, however, rose sharply, with many instances hitting the time limit without proving optimality.
    
	Again, the DP algorithm found optimal solutions for all large instances with \( d \in \{18,21\} \) with an increased runtime of more than 100 seconds. 
    However, it failed to return a solution for any instance with \( d = 24 \) and hit the 600-second runtime limit for all instances in this subgroup. 
	
	The greedy algorithm maintained a near-constant runtime across all large instances. 
    It failed to identify any optimal solutions. Interestingly, the average relative gap (\( 16.2\% \)) improved compared to that of the small instances (\( 20.8\% \)).
    This is a curiosity and may be a statistical artifact of our small sample or it may point to a good performance of the greedy algorithm at certain densities. Investigating this phenomenon with larger sample sizes would be valuable future work.
	
	The results for the large instance set reinforce the trade-offs observed for the small instance set. 
    While MIP and DP both achieved optimality for almost all large instances except those with \( d = 24 \), their scalability is constrained by proof overhead and state-space explosion, respectively. 
    In particular, DP failed to return any solution for the largest instances, as the corresponding state space exceeded the limits of feasible exploration. 
    On the other hand, MIP consistently delivered feasible solutions with small gaps, even when it could not prove optimality, and could match all the solutions found by DP. 
    In contrast, the greedy algorithm is extremely fast and simple, producing adequate solutions.
    
    \begin{sidewaystable}
  \centering
  \scriptsize
  \resizebox{\linewidth}{!}{%
    \begin{tabular}{cccc|rrrr|rrrr|rrr}
      \toprule
      \multicolumn{4}{c}{} 
        & \multicolumn{4}{c}{MIP} 
        & \multicolumn{4}{c}{DP} 
        & \multicolumn{3}{c}{Heuristic} \\
      \cmidrule(lr){5-8}\cmidrule(lr){9-12}\cmidrule(lr){13-15}
        \#targets & Max.\ width & Max.\ height & Count
        & \shortstack{\#found\\opt. [\%]\textsuperscript{*}} & \shortstack{\#proven\\opt. [\%]} & \shortstack{Avg.\\gap [\%]} & \shortstack{Avg.\\runtime [s]}
        & \shortstack{\#found\\opt. [\%]} 
        & \shortstack{Avg.\# total\\states} & \shortstack{Avg.\# explored\\states} & \shortstack{Avg.\\runtime [s]}
        & \shortstack{\#found\\opt. [\%]} & \shortstack{Avg.\\gap [\%]} & \shortstack{Avg.\\runtime [s]} \\
      \midrule
  18 & 20 & 20 & 30 & 100.0 & 26.7 & 2.1 & 517.853 & 100.0 & 262144 & 173988 & 6.795 & 0.0 & 19.2 & 0.006 \\
  18 & 20 & 24 & 30 & 96.7 & 23.3 & 2.3 & 511.654 & 100.0 & 262144 & 169162 & 7.097 & 0.0 & 15.7 & 0.006 \\
  18 & 20 & 28 & 30 & 100.0 & 33.3 & 0.9 & 460.083 & 100.0 & 262144 & 177791 & 7.535 & 0.0 & 11.2 & 0.007 \\
  18 & 24 & 20 & 30 & 100.0 & 20.0 & 2.1 & 512.553 & 100.0 & 262144 & 176012 & 6.825 & 0.0 & 17.2 & 0.006 \\
  18 & 24 & 24 & 30 & 96.7 & 20.0 & 1.4 & 529.335 & 100.0 & 262144 & 195807 & 7.714 & 0.0 & 14.0 & 0.007 \\
  18 & 24 & 28 & 30 & 100.0 & 26.7 & 0.6 & 506.535 & 100.0 & 262144 & 200500 & 8.150 & 0.0 & 14.2 & 0.008 \\
  18 & 28 & 20 & 30 & 100.0 & 26.7 & 1.4 & 535.701 & 100.0 & 262144 & 196460 & 7.521 & 0.0 & 12.8 & 0.007 \\
  18 & 28 & 24 & 30 & 100.0 & 16.7 & 1.0 & 542.708 & 100.0 & 262144 & 173639 & 6.703 & 0.0 & 11.6 & 0.008 \\
  18 & 28 & 28 & 30 & 100.0 & 33.3 & 0.6 & 511.484 & 100.0 & 262144 & 199606 & 7.589 & 0.0 & 8.6 & 0.008 \\
  \addlinespace
  21 & 20 & 20 & 30 & 100.0 & 6.7 & 6.3 & 579.024 & 100.0 & 2097152 & 1485982 & 134.866 & 0.0 & 20.9 & 0.006 \\
  21 & 20 & 24 & 30 & 100.0 & 3.3 & 3.7 & 599.252 & 100.0 & 2097152 & 1350257 & 147.601 & 0.0 & 23.5 & 0.007 \\
  21 & 20 & 28 & 30 & 96.7 & 13.3 & 3.2 & 575.147 & 100.0 & 2097152 & 1440752 & 153.200 & 0.0 & 22.1 & 0.008 \\
  21 & 24 & 20 & 30 & 100.0 & 0.0 & 5.5 & 600.014 & 100.0 & 2097152 & 1558591 & 128.221 & 0.0 & 21.2 & 0.007 \\
  21 & 24 & 24 & 30 & 96.7 & 3.3 & 3.2 & 582.473 & 100.0 & 2097152 & 1405321 & 112.914 & 0.0 & 17.6 & 0.008 \\
  21 & 24 & 28 & 30 & 100.0 & 16.7 & 2.5 & 554.710 & 100.0 & 2097152 & 1457794 & 127.200 & 0.0 & 18.0 & 0.008 \\
  21 & 28 & 20 & 30 & 100.0 & 0.0 & 4.5 & 600.014 & 100.0 & 2097152 & 1517776 & 123.560 & 0.0 & 15.7 & 0.007 \\
  21 & 28 & 24 & 30 & 100.0 & 0.0 & 3.4 & 599.951 & 100.0 & 2097152 & 1610235 & 148.077 & 0.0 & 14.7 & 0.008 \\
  21 & 28 & 28 & 30 & 100.0 & 3.3 & 2.3 & 591.612 & 100.0 & 2097152 & 1415768 & 114.047 & 0.0 & 13.3 & 0.009 \\
  \addlinespace
  24 & 20 & 20 & 30 & - & 0.0 & 12.8 & 599.955 & 0.0 & 16777216 & 661996 & 600.345 & - & - & 0.007 \\
  24 & 20 & 24 & 30 & - & 0.0 & 8.8 & 599.955 & 0.0 & 16777216 & 726202 & 600.440 & - & - & 0.008 \\
  24 & 20 & 28 & 30 & - & 3.3 & 5.3 & 587.432 & 0.0 & 16777216 & 657947 & 600.269 & - & - & 0.009 \\
  24 & 24 & 20 & 30 & - & 0.0 & 10.2 & 599.955 & 0.0 & 16777216 & 809895 & 600.438 & - & - & 0.008 \\
  24 & 24 & 24 & 30 & - & 0.0 & 7.0 & 600.014 & 0.0 & 16777216 & 819782 & 600.402 & - & - & 0.008 \\
  24 & 24 & 28 & 30 & - & 0.0 & 4.9 & 599.957 & 0.0 & 16777216 & 865538 & 600.279 & - & - & 0.005 \\
  24 & 28 & 20 & 30 & - & 0.0 & 9.9 & 599.958 & 0.0 & 16777216 & 1135960 & 600.434 & - & - & 0.002 \\
  24 & 28 & 24 & 30 & - & 0.0 & 6.1 & 600.015 & 0.0 & 16777216 & 1101188 & 600.502 & - & - & 0.002 \\
  24 & 28 & 28 & 30 & - & 0.0 & 3.8 & 599.956 & 0.0 & 16777216 & 1041597 & 600.289 & - & - & 0.002 \\
  \addlinespace
  All & -- & -- & 810 & 66.23 & 10.2 & 4.3 & 566.6 & 66.7 & 6378837 & 841687 & 247 & 0.0 & 16.2 & 0.007 \\
      \bottomrule
    \end{tabular}%
  }
  \caption{Results for large instances.\\
  \scriptsize
  \textsuperscript{*}When DP times out and MIP fails to prove optimality, we cannot verify if MIP's solution is optimal. Such cases are counted as 'not found optimal'.}
  
  \label{tab:large}
\end{sidewaystable}
	
	\section{Conclusion and Outlook}
	\label{sec:conclusions}
	
	In this paper, we initiated the algorithmic study of compact storage systems with side access, and introduced and analyzed the Side-Access Compact Retrieval Problem (SACRP). 
    We showed that the problem is strongly NP-complete and derived a number of structural insights. 
    Further, we showed that these can be exploited by algorithms. 
    The computational study shows that the DP consistently achieves optimal solutions with negligible runtimes for small to medium instances, while the MIP, though slower, reliably finds optimal solutions even when it cannot prove optimality. 
    On the other hand, the greedy algorithm provides adequate solution quality with near-instant computation times.

    Our results demonstrate that problem-specific structural insights can make exact algorithms viable even for problem classes where they are traditionally considered impractical. Despite exact methods typically reaching computational limits quickly on combinatorial problems of this scale, both our DP and MIP approaches successfully delivered optimal solutions for the majority of instances within a modest 10-minute time limit. This success stems from exploiting dominance rules and problem geometry, suggesting that with sufficient understanding of a problem's inherent structure, exact solution methods can remain competitive alternatives to heuristics. The quality of solutions obtained in short periods reinforces that compact storage systems with side access are not only theoretically interesting but also computationally tractable for practical deployment.
    
    We close the paper with some open problems and future research directions. First, our study focused on a restricted layout with lifts on a single side. A natural extension is the most general case with lifts on all sides. We conjecture that our geometric structural results can be generalized to this setting; however, the computational implications for adapting the exact algorithms remain unclear and require further research. Second, to assess the practical effectiveness of side-access compact storage systems, a direct benchmark against AutoStore, which is the currently most widely deployed compact storage system, is essential. 
    Third, and in the same vein, evaluating both systems under other objectives, such as time efficiency and robot utilization, would provide a fuller picture. Lastly, idle periods could be used to proactively rebalance the grid to a target layout under a limited lift budget in practice. Hence, developing structural characterizations of “good” layouts and algorithms to achieve them under limited energy/time budgets is of practical value and may be of theoretical interest.

	\section{Acknowledgement}
	This research is funded by dtec.bw – Digitalization and Technology Research Center of the Bundeswehr. dtec.bw is funded by the European Union – NextGenerationEU.

	\bibliographystyle{abbrv}
	\bibliography{lit.bib}

\end{document}